\documentclass[10pt,twocolumn]{IEEEtran-v17}
\pdfoutput=1
%\usepackage{doublespace}
%\setstretch{1.6}

\usepackage{amssymb}
\usepackage{amsthm}
\usepackage{verbatim}
\usepackage{eepic}
\usepackage{epic}
\usepackage{graphicx}

\usepackage{amsmath}
\usepackage{amsfonts}

% to put figures at end use the following command:
%\usepackage[nolists,nomarkers]{endfloat}

\setlength{\unitlength}{.06in}
%\setlength{\unitlength}{.12in}

%\oddsidemargin  -0.31in \evensidemargin -0.31in \topmargin 0truept
%\headheight 5truept \headsep 0truept \footskip 20truept
%\textheight 235truemm \textwidth 180truemm \columnsep 8truemm

\newcommand{\bpsi}{\mbox{\boldmath{$\Psi$}}}
\newcommand{\bphi}{\mbox{\boldmath{$\Phi$}}}

\newcommand{\bpsil}{\mbox{\boldmath{$\psi$}}}
\newcommand{\bphil}{\mbox{\boldmath{$\phi$}}}

\newcommand{\bdel}{\mbox{\boldmath{$\delta$}}}

\newcommand{\deft}{{\stackrel{\triangle}{=}}}

\newcommand{\mub}{\mu_{\operatorname{B}}}

\newcommand{\argmax}{\operatornamewithlimits{arg\,max}}

\newcommand{\bbi}{{{\bf I}}}

\newcommand{\bbD}{{{\bf D}}}
\newcommand{\bbd}{{{\bf D}}}

\newcommand{\bbb}{{{\bf B}}}
\newcommand{\bbI}{{{\bf I}}}
\newcommand{\bbj}{{{\bf J}}}

\newcommand{\bbw}{{{\bf W}}}

\newcommand{\bbf}{{{\bf F}}}
\newcommand{\bbu}{{{\bf U}}}

\newcommand{\bbz}{{{\bf Z}}}
\newcommand{\bbm}{{{\bf M}}}
\newcommand{\bx}{{\bf x}}

\newcommand{\brv}{{\bf r}}

\newcommand{\by}{{\bf y}}
\newcommand{\ba}{{\bf a}}
\newcommand{\bb}{{\bf b}}

\newcommand{\bo}{{\bf 0}}
\newcommand{\bd}{{\bf d}}
\newcommand{\bc}{{\bf c}}

\newcommand{\I}{{\mathcal{I}}}

\newcommand{\R}{{\mathcal{R}}}

\newcommand{\CC}{{\mathbb{C}}}

\newcommand{\st}{\operatorname{s.t.} \,}

\newcommand{\bl}{\left(}
\newcommand{\br}{\right)}

\newcommand{\bba}{{\mathbf A}}

\newcommand{\obbd}{\overline{\bbd}}

\newcommand{\bv}{{\mathbf v}}

\newtheorem{theorem}{Theorem}
\newtheorem{lemma}{Lemma}
\newtheorem{proposition}{Proposition}

%\title{\singlespace Compressed Sensing of Block-Sparse Signals: Uncertainty Relations and Efficient Recovery}
\title{Compressed Sensing of Block-Sparse Signals: Uncertainty Relations and Efficient Recovery}
\author{Yonina C. Eldar,~\IEEEmembership{Senior Member,~IEEE}, Patrick Kuppinger,~\IEEEmembership{Student Member,~IEEE}, and\\ Helmut B\"olcskei,~\IEEEmembership{Fellow,~IEEE}
\thanks{Y. Eldar is with the Department of Electrical Engineering, Technion, Haifa, Israel, Email: yonina@ee.technion.ac.il}
\thanks{H. B\"olcskei and P. Kuppinger are with the Communication Technology Laboratory, ETH Zurich, Zurich, Switzerland, Email: \{boelcskei,patricku\}@nari.ee.ethz.ch}
\thanks{This work was supported in part by the European Commission FP7 Network of Excellence in Wireless Communications NEWCOM++ and by the Israel Science Foundation.}
\thanks{This paper was presented in part at IEEE ICASSP 2009, Taipei, Taiwan, April~2009.}
}

\begin{document}

\maketitle

\begin{abstract}

We consider compressed sensing of block-sparse signals, i.e.,
sparse signals that have nonzero coefficients occurring in
clusters. An uncertainty relation for block-sparse signals is
derived, based on a block-coherence measure, which we introduce.
We then show that a block-version of the orthogonal matching
pursuit algorithm recovers block $k$-sparse signals in no more
than $k$ steps if the block-coherence is sufficiently small. The
same condition on block-coherence is shown to guarantee successful
recovery through a mixed $\ell_2/\ell_1$-optimization approach.
This complements previous recovery results for the block-sparse case which relied on small
block-restricted isometry constants. The significance of the
results presented in this paper lies in the fact that making explicit use of
block-sparsity can provably yield better reconstruction properties than
treating the signal as being sparse in the conventional sense,
thereby ignoring the additional structure in the problem.
\end{abstract}

%block sparsity, coherence, uncertainty relations

\section{Introduction}

The framework of compressed sensing is concerned with the recovery of
an unknown vector from an underdetermined system of linear
equations \cite{Candes06,Donoho06}. The key property exploited for recovery
of the unknown data is the assumption of sparsity. 
More concretely, denoting by $\bx$ an unknown
vector that is observed through a measurement matrix $\bbd$ according to $\by=\bbd\bx$, it is
assumed that $\bx$ has only a few nonzero entries. A
fundamental observation is that if $\bbd$ is chosen properly and
$\bx$ is sufficiently sparse, then $\bx$ can be recovered from $\by=\bbd\bx$, irrespectively of the locations of
the nonzero entries of $\bx$, even if $\bbd$ has far fewer rows than columns. This result has
given rise to a multitude of different recovery algorithms which
can be proven to recover a sparse vector $\bx$ under a variety of
different conditions on $\bbd$
\cite{avdama97,efhajoti04,Tropp04,Candes06,CT05}.

Two widely studied recovery algorithms
are the basis pursuit (BP), or $\ell_1$-minimization approach
\cite{chdosa99,Candes06}, and the orthogonal matching pursuit (OMP) algorithm
\cite{Mallat93}. One of the main tools for the characterization of the recovery
abilities of BP is the restricted isometry property
(RIP) \cite{Candes06,C08}. Specifically, if the measurement
matrix $\bbd$ satisfies the RIP with appropriate restricted isometry constants, then
$\bx$ can be recovered by BP. Unfortunately, determining the RIP constants of a
given matrix is in general an NP-hard problem. A more simple and
convenient way to characterize recovery properties of a dictionary is via the coherence
measure \cite{DH01,ElBr02,Tropp04}. It was shown in \cite{Tropp04,DE03} that appropriate conditions on the coherence guarantee that both BP and OMP recover
the sparse vector $\bx$. The coherence also plays an important
role in uncertainty relations for sparse signals
\cite{DH01,ElBr02,E08a}.

In this paper, we consider compressed sensing of sparse signals that exhibit
additional structure in the form of the nonzero coefficients
occurring in clusters. Such signals are referred to as
block-sparse \cite{EM082,EldarMishali2009}. Our goal is to explicitly take
this block structure into account, both in terms of the recovery
algorithms and in terms of the measures that are used to
characterize their performance. The significance of the
results we obtain lies in the fact that making explicit use of
block-sparsity can provably yield better reconstruction properties than
treating the signal as being sparse in the conventional sense,
thereby ignoring the additional structure in the problem.

Block-sparsity arises naturally, e.g., when dealing with multi-band signals \cite{MishaliEldar2009,ME09,Landau67} or in 
measurements of gene expression levels \cite{PVMH08}. Another interesting
special case of the block-sparse model appears in the multiple measurement
vector (MMV) problem, which deals with the measurement of a set of vectors that
share a joint sparsity pattern \cite{Cotter,Chen,MishaliEldar2008a,EM082,ER09}.
Furthermore, it was shown in \cite{EM082,EldarMishali2009} that the block-sparsity model can be used to treat the problem of
sampling signals that lie in a union of subspaces \cite{LD08,Blumensath07,EM082,Eldar2009,E08a,MishaliEldar2009,ME09}.  
%In \cite{EM082} a particular structured union
%was suggested where each subspace in the union is itself a (different) direct sum of a given set
%of subspaces. This model was shown to capture many practically
%interesting cases. The important result in this context is that
%recovery over such a union can be formulated as recovery of a block-sparse vector.

One approach to exploiting block-sparsity is by suitably
extending the BP method, resulting in a mixed $\ell_2/\ell_1$-norm recovery
algorithm \cite{EM082,Stoj08}. It was shown in \cite{EM082} that
if $\bbd$ has small block-restricted isometry constants, which
generalizes the conventional RIP notion, then the mixed norm
method is guaranteed to recover any block-sparse signal, irrespectively of
the locations of the nonzero blocks.
Furthermore, recovery will be robust in the presence of noise and modeling errors (i.e., when the vector is not exactly block-sparse). 
It was also established in \cite{EM082} that certain random matrices
satisfy the block RIP with overwhelming probability, and that this
probability is substantially larger than that of satisfying the
standard RIP. In \cite{richb08} extensions of the 
CoSaMP algorithm \cite{Tropp08} and of iterative hard
thresholding \cite{Blumensath08} to the model-based setting, which includes block-sparsity as a special case, 
are proposed and shown to exhibit provable recovery guarantees and robustness properties.

The focus of the present paper is on developing a parallel line of results by generalizing the notion
of coherence to the block setting. This can be seen as extending
the program laid out in \cite{Tropp04,DE03} to the block-sparse
case. Specifically, we define two separate
notions of coherence: coherence within a block, referred to as
sub-coherence and capturing local properties of the dictionary, and
block-coherence, describing global dictionary properties. We will
show that both coherence notions are necessary to characterize the essence of block-sparsity.
We present extensions of the BP, the matching pursuit (MP), and the OMP algorithms to
the block-sparse case and prove corresponding performance guarantees.

We point out that the term block-coherence was used previously in
\cite{Peotta2007} in the context of quantifying the recovery
performance of the MP algorithm in block-incoherent dictionaries.
Our definition pertains to block-versions of the MP and the OMP
algorithm and is different from that used in \cite{Peotta2007}.

We begin, in Section~\ref{sec:bs}, by introducing our definitions
of block-coherence and sub-coherence. In Section~\ref{sec:uc}, we
establish an uncertainty relation for block-sparse signals, and
show how the block-coherence measure defined previously occurs naturally in this
uncertainty relation. In Section~\ref{sec:er}, we introduce a block
version of the OMP algorithm, termed BOMP, and of the MP algorithm \cite{Mallat93}, termed BMP, and find a sufficient condition
on block-coherence that guarantees recovery of block $k$-sparse
signals through BOMP in no more than $k$ steps as well as exponential convergence of BMP.
The same condition on
block-coherence is shown to guarantee successful recovery through
the mixed $\ell_2/\ell_1$ optimization approach. The BOMP algorithm can be viewed as an extension of the subspace
OMP method for MMV systems \cite{ER09}.
The proofs of our main results are
contained in Section~\ref{sec:proofth2}. A discussion on the performance
improvements that can be obtained through exploiting block-sparsity is provided in Section~\ref{sec:discuss}.
Corresponding numerical results are reported in Section~\ref{sec:num}. We conclude in Section~\ref{sec:conclude}.

Throughout the paper, we denote vectors by boldface lowercase
letters, e.g., $\bx$, and matrices by boldface uppercase letters,
e.g., $\bba$. The identity matrix is written as $\bbi$ or $\bbi_d$
when the dimension is not clear from the context. For a given
matrix $\bba$, $\bba^{T}$, $\bba^{H}$, and $\mbox{Tr}(\bba)$
denote its transpose, conjugate transpose, and trace,
respectively, $\bba^\dagger$ is the pseudo inverse, $\R(\bba)$
denotes the range space of $\bba$, $\bba_{i,j}$ is the element in
the $i$th row and $j$th column of $\bba$, and $\ba_{\ell}$ stands
for the $\ell$th column of $\bba$. The $\ell$th element of a
vector $\bx$ is denoted by $x_{\ell}$.  The Euclidean norm of the
vector $\bx$ is $\|\bx\|_2=\sqrt{\bx^H\bx}$,
$\|\bx\|_1=\sum_{\ell} |x_{\ell}|$ is the $\ell_1$-norm,
$\|\bx\|_{\infty}=\max_{\ell} |x_{\ell}|$ is the
$\ell_{\infty}$-norm, and $\|\bx\|_{0}$ designates the number of
nonzero entries in ${\bf x}$. The Kronecker product of the
matrices $\bba$ and $\bbb$ is written as $\bba \otimes \bbb$. The
spectral norm of $\bba$ is denoted by
$\rho(\bba)=\lambda^{1/2}_{\max}(\bba^H\bba)$, where
$\lambda_{\max}(\bbb)$ is the largest eigenvalue of the
positive-semidefinite matrix $\bbb$.

%The following variables are
%used in the sequel: $d$ is the block length, $N=Md$ is the length
%of the input signal $\bx$, $L=Rd$ is the  length of the observed
%vector $\by$.

\section{Block-Sparsity and Block-Coherence}
\label{sec:bs}
\subsection{Block-sparsity}

We consider the problem of representing a vector $\by \in \CC^L$
in a given dictionary $\bbd$ of size $L \times N$ with $L<N$, so
that
\begin{equation}
\label{eq:samples} \by=\bbd\bx
\end{equation}
for a coefficient vector $\bx \in \CC^N$. Since the system of
equations (\ref{eq:samples}) is underdetermined, there are, in general,
many possible choices of $\bx$ that satisfy (\ref{eq:samples}) for a given $\by$.
Therefore, further assumptions on $\bx$ are needed to guarantee uniqueness of the representation.
Here, we consider the case of sparse vectors $\bx$, i.e., $\bx$ has
only a few nonzero entries relative to its dimension. The
standard sparsity model considered in compressed sensing \cite{Candes06,Donoho06} assumes that $\bx$
has at most $k$ nonzero elements, which can appear anywhere in the vector.
As discussed in \cite{richb08,EM082,EldarMishali2009} there are practical scenarios that involve vectors $\bx$ with
nonzero entries appearing in blocks (or clusters)
rather than being arbitrarily spread throughout the
vector. Specific examples include signals that lie in unions of subspaces \cite{Blumensath07,LD08,EM082,Eldar2009},
and multi-band signals \cite{MishaliEldar2009,ME09,Landau67}.

The recovery of block-sparse vectors $\bx$ from measurements $\by=\bbd\bx$ is the
focus of this paper. To define
block-sparsity, we view $\bx$ as a concatenation of blocks ---assumed throughout the paper to be of
length $d$--- with $\bx[\ell]$ denoting the $\ell$th block,
i.e.,
\begin{equation}
\label{eq:xblock} \bx=[\underbrace{x_1 \,\, \ldots \,\,
x_d}_{\bx^{T}[1]} \,\,\underbrace{x_{d+1}\,\,\ldots\,\,
x_{2d}}_{\bx^{T}[2]}\,\, \ldots\,\, \underbrace{x_{N-d+1}\,\,\ldots
\,\,x_{N}}_{\bx^{T}[M]}]^T
\end{equation}
where $N=Md$. We furthermore assume that $L=Rd$ with $R$ integer.
Similarly to (\ref{eq:xblock}), we can represent $\bbd$ as a
concatenation of column-blocks $\bbd[\ell]$ of size $L \times d$:
\begin{equation}
\label{eq:dblock} \bbd=[\underbrace{\bd_1 \,\, \ldots \,\,
\bd_d}_{\bbd[1]} \,\,\underbrace{\bd_{d+1}\,\,\ldots \,\,
\bd_{2d}}_{\bbd[2]}\,\, \ldots\,\,
\underbrace{\bd_{N-d+1}\,\,\ldots \,\,\bd_{N}}_{\bbd[M]}].
\end{equation}

A vector $\bx \in \CC^N$ is called block $k$-sparse if $\bx[\ell]$
has nonzero Euclidean norm for at most $k$ indices $\ell$. 
When $d=1$, block-sparsity reduces to conventional sparsity as defined in \cite{Candes06,Donoho06}. Denoting
\begin{equation}
\label{eq:mixed} \|\bx\|_{2,0} = \sum_{\ell=1}^M
I(\|\bx[\ell]\|_2>0)
\end{equation}
with the indicator function $I(\cdot)$,
a block $k$-sparse vector $\bx$ is defined as a vector that satisfies
$\|\bx\|_{2,0} \leq k$. In the remainder of the paper conventional sparsity will be referred to simply as sparsity, in contrast to block-sparsity.

We are interested in providing conditions on the dictionary $\bf{D}$
ensuring that the block-sparse vector $\bx$ can be recovered
from measurements $\by$ of the form (\ref{eq:samples}) through
computationally efficient algorithms. Our approach is partly
based on \cite{Tropp04,ElBr02,DE03} (and the mathematical techniques used
therein) where equivalent results are provided for the sparse
case. The two algorithms investigated are BOMP and a mixed $\ell_2/\ell_1$-optimization program
(referred to as L-OPT \cite{EM082}). It was shown in \cite{EM082} that L-OPT
yields perfect recovery if the dictionary $\bbd$ satisfies appropriate
restricted isometry properties. The purpose of this paper is to provide
recovery conditions for BOMP and L-OPT based on a suitably defined measure of
block-coherence. We will see that block-coherence plays a role similar to
coherence in the case of conventional sparsity.

%More specifically, we find that the recovery
%thresholds for L-OPT and BOMP are identical to those reported in \cite{Tropp04}
%for $\ell_1$-optimization based recovery and for OMP provided
%the conventional coherence as defined in \cite{Tropp04,ElBr02} is replaced by block-coherence.

%the $\ell_1$ (or
%basis-pursuit) recovery method \cite{CDS99,D06}, and orthogonal
%matching pursuit \cite{TroppI}. The results in \cite{T04,EB02}\/
%are stated in terms of the dictionary coherence. In \cite{EM082}
%the $\ell_1$ recovery algorithm for conventional sparsity was
%generalized to block-sparse recovery by introducing a mixed
%$\ell_2/\ell_1$ program. It was then shown that if the dictionary
%satisfies appropriate RIP conditions, then perfect recovery via
%the resulting convex optimization problem is possible. Our goal
%here is to provide further recovery conditions via efficient
%computational algorithms by extending the approach of
%\cite{T04,EB02}, which is based on coherence.
%Therefore, as a first step in our
%development, we extend the conventional coherence measure to
%block-sparsity by defining block-coherence. As we show,
%block-coherence plays a similar role as coherence in the context
%of block-sparsity: it provides a fundamental limit on the ability
%to sparsely represent a vector in a pair of bases (\ie an
%uncertainty relation) and it can be used to derive conditions
%under which several efficient computational algorithms can recover
%a block-sparse vector.

Before defining block-coherence, we note that in order to
have a unique block $k$-sparse $\bx$ satisfying (\ref{eq:samples}) it
is clear that we need $R\,>\,k$ and the columns within each block $\bbd[\ell],\ell=1,2,...,M$, need
to be linearly independent. More generally, we have the following
proposition taken from \cite{EM082}.
\begin{proposition}
\label{prop:inv} The representation (\ref{eq:samples}) is unique
if and only if $\bbd {\bf g} \neq \bo$ for every ${\bf g} \neq \bo$ that
is block $2k$-sparse.
\end{proposition}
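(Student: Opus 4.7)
The plan is to prove this biconditional by the standard difference-of-two-solutions argument, analogous to the classical spark characterization of uniqueness for ordinary $k$-sparse recovery. The set of block $k$-sparse vectors is symmetric (closed under negation), and crucially, the difference of two block $k$-sparse vectors is block $2k$-sparse because $\|\bx_1-\bx_2\|_{2,0} \le \|\bx_1\|_{2,0}+\|\bx_2\|_{2,0}$, which follows directly from the block-support of a sum being contained in the union of the individual block-supports.

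First I would prove the contrapositive of the forward direction. Suppose the representation is not unique, so that there exist distinct block $k$-sparse vectors $\bx_1 \neq \bx_2$ with $\bbd\bx_1 = \bbd\bx_2 = \by$. Setting $\bg = \bx_1 - \bx_2$, we have $\bg \neq \bo$, $\bbd\bg = \bo$, and by the subadditivity of $\|\cdot\|_{2,0}$ noted above, $\|\bg\|_{2,0} \le 2k$, so $\bg$ is a nonzero block $2k$-sparse vector in the null space of $\bbd$. This contradicts the assumption that $\bbd\bg\neq\bo$ for all nonzero block $2k$-sparse $\bg$.

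For the reverse direction, assume there exists a nonzero block $2k$-sparse $\bg$ with $\bbd\bg = \bo$. I would then split the (at most $2k$) nonzero blocks of $\bg$ into two disjoint groups of at most $k$ blocks each: let $\bx_1$ be the vector agreeing with $\bg$ on one group and zero elsewhere, and let $\bx_2$ be the vector agreeing with $-\bg$ on the remaining group and zero elsewhere. Then $\bx_1$ and $\bx_2$ are both block $k$-sparse, they are distinct (since $\bg = \bx_1 - \bx_2 \neq \bo$ forces at least one of them to be nonzero and the other to differ from it), and $\bbd\bx_1 - \bbd\bx_2 = \bbd\bg = \bo$, so $\bbd\bx_1 = \bbd\bx_2$. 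This exhibits two distinct block $k$-sparse preimages of the same measurement, contradicting uniqueness.

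I do not expect any serious obstacle: the only subtlety is verifying that $\|\cdot\|_{2,0}$ is subadditive under addition, which is immediate from the definition (\ref{eq:mixed}), and that the splitting in the reverse direction actually produces two \emph{distinct} vectors, which is guaranteed as long as we place at least one nonzero block of $\bg$ in each group when $\|\bg\|_{2,0}\ge 2$, and handle the trivial case $\|\bg\|_{2,0}=1$ by taking $\bx_1 = \bg$ and $\bx_2 = \bo$ (both block $k$-sparse since $k\ge 1$).
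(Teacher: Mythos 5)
Your proof is correct. Note that the paper itself gives no proof of this proposition --- it is imported verbatim from reference [EM082] --- so there is nothing internal to compare against; your argument is the standard spark-type characterization (difference of two block $k$-sparse solutions is block $2k$-sparse, and conversely any nonzero block $2k$-sparse null vector splits into two distinct block $k$-sparse vectors with equal image), which is exactly how the cited source establishes it. Both directions, including the subadditivity of $\|\cdot\|_{2,0}$ and the distinctness of the two vectors produced by the splitting, are handled correctly.
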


From Proposition~\ref{prop:inv} the columns of $\bbd[\ell]$
are linearly independent for all $\ell$. Throughout the paper, we
assume that the dictionaries we consider satisfy the
condition of Proposition \ref{prop:inv}, and, furthermore, $\|\bd_{r}\|_2=1,\,r=1,2,...,N$.

%Since from Proposition~\ref{prop:inv} the columns of $\bbd[\ell]$,
%are linearly independent for all $\ell$, we may write
%$\bbd[\ell]=\bba[\ell]\bbw_{\ell}$ where $\bba[\ell]$ consists of
%orthonormal columns that span $\R(\bbd[\ell])$ and $\bbw_{\ell}$
%is invertible. Denoting by $\bba$ the $L \times N$ matrix with
%blocks $\bba[\ell]$, and by $\bbw$ the $N\times N$ block-diagonal
%matrix with blocks $\bbw_{\ell}$, we conclude that
%$\bbd=\bba\bbw$. Since $\bbw$ is block-diagonal and invertible,
%$\bc=\bbw\bx$ is block-sparse with the same block-sparsity level
%as $\bx$. Therefore, in the sequel, we assume, without loss of
%generality, that $\bbd$ consists of orthonormal blocks, i.e.,
%$\bbd^H[\ell]\bbd[\ell]=\bbi_d$. Throughout the paper, we
%furthermore assume that the dictionaries we consider satisfy the
%condition of Proposition \ref{prop:inv}.

\subsection{Block-coherence}

The coherence of a dictionary $\bbd$ measures the similarity
between basis elements, and is defined by \cite{DH01,ElBr02,Tropp04}
\begin{equation}
\label{eq:cc} \mu=\max_{\ell, r \neq \ell} |\bd_{\ell}^H\bd_r|.
\end{equation}
This definition was introduced in
\cite{Mallat93} to heuristically characterize the performance of
the MP algorithm, and was later shown to play a fundamental role in quantifying
recovery thresholds for the OMP algorithm and for BP \cite{Tropp04}.
The coherence $\mu$ furthermore occurs in $\ell_1$-uncertainty relations relevant
in the context of decomposing a vector into two orthonormal bases
\cite{DH01,ElBr02}. A definition of coherence for analog
signals, along with a corresponding uncertainty relation, is
provided in \cite{E08a}.

It is natural to seek a generalization of coherence to
the block-sparse setting with the resulting block-coherence measure
having the same operational significance as the coherence $\mu$ in the
sparse case. Below, we propose such a generalization, which is shown ---in Sections~\ref{sec:uc} and \ref{sec:er}---
to occur naturally in uncertainty relations and in recovery thresholds for the block-sparse
case.

We define the \textit{block-coherence}
of $\bbd$ as
\begin{equation}
\label{eq:bc} \mub=\max_{\ell, r \neq \ell}
\frac{1}{d}\rho(\bbm[\ell,r])
\end{equation}
with
\begin{equation}
\label{eq:bcm}\bbm[\ell,r]=\bbd^H[\ell]\bbd[r].
\end{equation}
Note that $\bbm[\ell,r]$ is the $(\ell, r)$th $d\,\times\,d$
block of the $N \times N$ matrix $\bbm=\bbd^H\bbd$. When $d=1$,
as expected, $\mub=\mu$. While $\mub$ quantifies global properties of the dictionary $\bbd$, local
properties are described by the \textit{sub-coherence} of $\bbD$, defined as
\begin{align}\label{eq:subcoherence}
 \nu = \max_\ell \max_{i,j\neq i}|\bd_{i}^H\bd_j|,\quad \bd_i,\bd_j\in\bbD[\ell].
\end{align}
We define $\nu=0$ for $d=1$. In addition, if the columns of
$\bbd[\ell]$ are orthonormal for each $\ell$, then $\nu=0$.

%similar interpretation. Namely, we
%expect the block-coherence measure to characterize the ability to
%recover $\bx$ efficiently in (\ref{eq:samples}). We further would
%like this measure to play a role in uncertainty relations
%involving block decompositions. Below we propose a natural
%generalization of (\ref{eq:cc}) to the block setting. In
%Section~\ref{sec:uc} we show that this definition figures in a
%block uncertainty principle.

Since the columns of $\bbd$ have unit norm, the coherence $\mu$ in (\ref{eq:cc}) satisfies $\mu\,\in\,[0,1]$ and therefore, as a consequence of $\nu\in[0,\mu]$, we have
$\nu\,\in\,[0,1]$. The following proposition establishes the same limits for the block-coherence $\mub$, which explains the choice of normalization by
$1/d$ in the definition (\ref{eq:bc}).
%\begin{proposition}
%\label{prop:cb} The block-coherence $\mub$ satisfies $0 \leq \mub \leq 1$.
%\end{proposition}

%\begin{IEEEproof} Clearly $\mub \geq 0$. To prove that $\mub \leq 1$, note that
%$\rho(\bba) \leq \|\bba\|$,
%where $\|\bba\|$ is any matrix norm. In particular, if $\bba$ is
%a $d \times d$ matrix, then
%\begin{equation}
%\label{eq:me} \rho(\bba) \leq \max_j \sum_{i} |\bba_{i, j}| \leq d
%\max_{i,j}|\bba_{i, j}|.
%\end{equation}
%In our case, $\bba=\bbm[\ell,r]$. Since the columns of $\bbd$ are
%normalized, all the elements of $\bbm[\ell,r]$ have absolute value
%smaller than or equal to $1$, so that from (\ref{eq:me}),
%$\rho(\bbm[\ell,r]) \leq d$, and hence $\mub \leq 1$.
%\end{IEEEproof}

%It is interesting to compare $\mub$ with the coherence $\mu$
%defined in (\ref{eq:cc}) for the same dictionary $\bbd$.
%\begin{proposition}
%\label{prop:co} For any dictionary $\bbd$, we have $\mub \leq \mu$.
%\end{proposition}

%\begin{IEEEproof} Follows immediately from (\ref{eq:me}).
%\end{IEEEproof}
In the remainder
of the paper conventional coherence will be referred to simply as
coherence, in contrast to block-coherence and sub-coherence.
\begin{proposition}
\label{prop:cb} The block-coherence $\mub$ satisfies $0 \leq \mub \leq \mu$.
\end{proposition}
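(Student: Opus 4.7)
The plan is to handle the two inequalities separately. The lower bound $0 \le \mu_{\operatorname{B}}$ is immediate, since $\rho(\cdot)$ is a norm, hence nonnegative, and dividing by $d>0$ preserves the sign. All the work lies in showing $\mu_{\operatorname{B}} \le \mu$.

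For the upper bound, I would fix arbitrary $\ell \neq r$ and examine the off-diagonal block $\bbM[\ell,r] = \bbD^H[\ell]\bbD[r]$. Since $\ell \neq r$, every column of $\bbD[\ell]$ is distinct from every column of $\bbD[r]$, so each of the $d^2$ entries of $\bbM[\ell,r]$ has the form $\bd_i^H \bd_j$ with $i \neq j$, and is therefore bounded in modulus by $\mu$ by the definition (\ref{eq:cc}). The task then reduces to relating the spectral norm $\rho(\bbM[\ell,r])$ of a $d \times d$ matrix with entry-wise bound $\mu$ to $d\mu$.

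This is a standard matrix norm inequality that I would invoke via one of two equivalent routes. The cleanest is to use $\rho(A) \le \sqrt{\|A\|_1\,\|A\|_\infty}$, where $\|\cdot\|_1$ and $\|\cdot\|_\infty$ are the maximum absolute column-sum and row-sum norms. Each column sum and each row sum of $\bbM[\ell,r]$ contains $d$ terms each of modulus at most $\mu$, so both norms are at most $d\mu$, giving $\rho(\bbM[\ell,r]) \le d\mu$. Alternatively, one could use $\rho(A) \le \|A\|_F \le d\mu$ directly. Either way, dividing by $d$ and then taking the maximum over $\ell \neq r$ yields $\mu_{\operatorname{B}} \le \mu$.

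There is no real obstacle here; the proposition is essentially a sanity check that the normalization by $1/d$ in (\ref{eq:bc}) is the right one to put $\mu_{\operatorname{B}}$ on the same $[0,1]$ scale as $\mu$. The only subtlety worth flagging is that the bound $\rho(A) \le d \cdot \max_{i,j} |A_{ij}|$ for $d \times d$ matrices is tight (attained, e.g., by the all-ones matrix), which is also why an entry-wise bound suffices and no finer structural property of $\bbD^H[\ell]\bbD[r]$ needs to be invoked.
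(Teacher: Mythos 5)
Your proof is correct and follows essentially the same route as the paper: both bound every entry of the off-diagonal block $\bbM[\ell,r]$ by $\mu$ and then convert this entrywise bound into $\rho(\bbM[\ell,r])\le d\mu$. The only difference is cosmetic --- the paper obtains that last step by applying Ger\v{s}gorin's disc theorem to $\bbM^H[\ell,r]\bbM[\ell,r]$, which is in effect the same $\rho(\bba)\le\sqrt{\|\bba\|_1\|\bba\|_\infty}$ estimate you invoke directly.
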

\begin{IEEEproof} Since the spectral norm is non-negative, clearly $\mub \geq 0$. To prove that $\mub\leq\mu$, note that the entries of $\bbm[\ell,r]$ for $\ell\neq r$ have absolute value smaller than or equal to $\mu$. It then follows that
\begin{align}
        \mub & = \max_{\ell,r\neq \ell}\frac{1}{d} \rho(\bbm[\ell,r]) \nonumber\\
        & = \max_{\ell,r\neq \ell}\frac{1}{d} \sqrt{\lambda_{\max}(\bbm^H[\ell,r]\bbm[\ell,r])} \nonumber\\
        &\leq \max_{\ell,r\neq \ell}\frac{1}{d} \sqrt{\max_i\sum_{j=1}^d \big\vert(\bbm^H[\ell,r]\bbm[\ell,r])_{i,j} \big\vert}\label{eq:Gershgorin}\\
        &\leq \max_{\ell,r\neq \ell}\frac{1}{d} \sqrt{\max_i\sum_{j=1}^d d\mu^2} \nonumber \\
        & = \mu
\end{align}
where \eqref{eq:Gershgorin} is a consequence of Ger\v{s}gorin's disc theorem (\cite[Corollary 6.1.5]{Horn85}).
\end{IEEEproof}
From $\mu\,\le\,1$, with Proposition~\ref{prop:cb}, it now follows trivially that $\mub\,\le\,1$.

When the columns of $\bbD[\ell]$ are orthonormal for each $\ell$, we can further bound $\mub$.
\begin{proposition}
\label{prop:orthoblocks} If $\bbD$ consists of orthonormal blocks,
i.e., $\bbD^H[\ell]\bbD[\ell]=\bbi_d$ for all $\ell$, then
$\mub\leq1/d$.
\end{proposition}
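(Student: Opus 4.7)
The plan is to bound $\rho(\bbM[\ell,r]) = \rho(\bbD^H[\ell]\bbD[r])$ directly by the product of the spectral norms of $\bbD^H[\ell]$ and $\bbD[r]$, and then observe that each of these equals $1$ under the orthonormality hypothesis. Since the definition of $\mu_B$ in \eqref{eq:bc} already contains the normalization factor $1/d$, the claim $\mu_B \leq 1/d$ will follow immediately.

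Concretely, I would proceed as follows. First, invoke submultiplicativity of the spectral norm to write
\begin{equation}
\rho(\bbM[\ell,r]) = \rho\bigl(\bbD^H[\ell]\bbD[r]\bigr) \leq \rho\bigl(\bbD^H[\ell]\bigr)\,\rho\bigl(\bbD[r]\bigr).
\end{equation}
Second, observe that for any $\ell$, the hypothesis $\bbD^H[\ell]\bbD[\ell] = \bbI_d$ implies that $\lambda_{\max}(\bbD^H[\ell]\bbD[\ell]) = 1$, so $\rho(\bbD[\ell]) = 1$, and the same holds for $\bbD^H[\ell]$ since $\rho(\bbA) = \rho(\bbA^H)$. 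Hence $\rho(\bbM[\ell,r]) \leq 1$ for every pair $\ell \neq r$. Third, substitute this bound into \eqref{eq:bc} to conclude
\begin{equation}
\mu_B = \max_{\ell,r\neq \ell} \frac{1}{d}\rho(\bbM[\ell,r]) \leq \frac{1}{d}.
\end{equation}

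There is essentially no obstacle here: the argument rests only on standard properties of the spectral norm and the definition of orthonormal blocks. The one small subtlety worth flagging is that submultiplicativity of $\rho(\cdot)$ is a property of the operator $2$-norm that holds for arbitrary (rectangular) compatible matrices, so applying it to the $d \times L$ matrix $\bbD^H[\ell]$ and the $L \times d$ matrix $\bbD[r]$ is legitimate. Apart from that, the proof is a one-line consequence of the definitions, which is in line with the brief nature of the companion Proposition~\ref{prop:cb}.
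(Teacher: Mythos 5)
Your proof is correct and is essentially identical to the paper's: both bound $\rho(\bbD^H[\ell]\bbD[r])$ by $\rho(\bbD^H[\ell])\,\rho(\bbD[r])$ via submultiplicativity of the spectral norm and then use $\bbD^H[\ell]\bbD[\ell]=\bbI_d$ (together with $\rho(\bba)=\rho(\bba^H)$) to conclude each factor equals $1$. No gaps; your remark about submultiplicativity holding for rectangular compatible matrices is a fair point that the paper leaves implicit.
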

\begin{IEEEproof} Using the submultiplicativity of the spectral norm, we have
\begin{align}
        \mub & = \max_{\ell,r\neq \ell}\frac{1}{d} \rho(\bbm[\ell,r]) \nonumber\\
        & =  \max_{\ell,r\neq \ell}\frac{1}{d} \rho(\bbD^H[\ell]\bbD[r]) \nonumber \\
        &\leq \max_{\ell,r\neq \ell}\frac{1}{d} \rho(\bbD^H[\ell])\rho(\bbD[r]) \nonumber \\
        & = \frac{1}{d}\label{eq:rhoDl}
\end{align}
where \eqref{eq:rhoDl} follows from $\bbD^H[\ell]\bbD[\ell]=\bbi_d,$ for all $\ell$, $\lambda_{\max}(\bbD^H[\ell]\bbD[\ell])=\lambda_{\max}(\bbD[\ell]\bbD^H[\ell])$, and
$\lambda_{\max}(\bbi_d)=1$ combined with the definition of the spectral norm.

% $\bbD[\ell]\bbD^H[\ell]$ is a projector and as such has maximum eigenvalue $\lambda_{\max}(\bbD[\ell]\bbD^H[\ell])=1$.

\end{IEEEproof}

%\noindent The proof follows immediately from (\ref{eq:me}). As we
%will see below, the ability to recover block-sparse $\bx$ from
%$\by$ depends critically on the block-coherence. Since $\mub \leq
%\mu$, using recovery algorithms that make explicit use of block
%sparsity can ensure improved recovery properties for the same
%sparsity level.

\section{Uncertainty Relation for Block-Sparse Signals}
\label{sec:uc}

We next show how the block-coherence $\mub$ defined above
naturally appears in an uncertainty relation for block-sparse
signals. This uncertainty relation generalizes the corresponding
result for the sparse case derived in \cite{DH01,ElBr02}.

Uncertainty relations for sparse signals are concerned with
representations of a vector $\bx \in \CC^L$ in two
different orthonormal bases for $\CC^L$: $\{\bphil_{\ell},1 \leq
\ell \leq L\}$ and $\{\bpsil_{\ell},1 \leq \ell \leq L\}$
\cite{DH01,ElBr02}. Any vector $\bx\,\in\,\CC^L$ can be expanded
uniquely in terms of each one of these bases according to:
\begin{equation}
\label{eq:xn} \bx=\sum_{\ell=1}^L a_\ell
\bphil_{\ell}=\sum_{\ell=1}^L b_\ell \bpsil_{\ell}.
\end{equation}
The uncertainty relation sets limits on the sparsity of the
decompositions (\ref{eq:xn}) for any $\bx \in \CC^L$.
Specifically, denoting $A=\|\ba\|_0$ and $B=\|\bb\|_0$, it is
shown in \cite{ElBr02} that
\begin{equation}
\label{eq:ucd} \frac{1}{2}\bl A+B \br \geq \sqrt{AB} \geq
\frac{1}{\mu(\bphi,\bpsi)}
\end{equation}
where $\mu(\bphi,\bpsi)$ is the coherence between $\bphi$ and
$\bpsi$, defined as
\begin{equation}
\label{eq:mud}
\mu(\bphi,\bpsi)=\max_{\ell,r}|\bphil^H_{\ell}\bpsil_r|.
\end{equation}
It is easily seen that for $\bbd$ consisting of the
orthonormal bases $\bphi$ and $\bpsi$, i.e., $\bbd=[\bphi\,\,\bpsi]$, we have $\mu(\bphi,\bpsi)=\mu$, where $\mu$
is as defined in (\ref{eq:cc}) and associated with ${\bf
D}=[\bphi\,\,\bpsi]$.
%This is because $\bphi^H\bphi=\bbi$ and
%similarly for $\bpsi$.

In \cite{DH01} it is shown that $1/\sqrt{L} \leq \mu(\bphi,\bpsi)
\leq 1$. The upper bound follows from the Cauchy-Schwarz
inequality and the fact that the basis elements have norm $1$. The
lower bound is obtained as follows: The matrix
$\bbm=\bphi^H\bpsi$ is unitary so that
$\sum_{\ell=1}^{L}\sum_{r=1}^{L}|\bphil^H_{\ell}\bpsil_r|^{2}=\mbox{Tr}(\bbm^{H}\bbm)=\mbox{Tr}(\bbi_{L})=L$.
Consequently, we have $L^{2}\max_{\ell,r}|\bphil^H_{\ell}\bpsil_r|^{2}\,\ge\,L$ which implies
$\mu(\bphi,\bpsi)\,\ge\,1/\sqrt{L}$. This lower bound can be
achieved, for example, by choosing the two orthonormal bases $\bphi$ and $\bpsi$ as the spike
(identity) and Fourier bases \cite{DH01}. With this choice, the
uncertainty relation (\ref{eq:ucd}) becomes
\begin{equation}
\label{eq:ucdf} A+B  \geq 2\sqrt{AB} \geq 2\sqrt{L}.
\end{equation}
 When $\sqrt{L}$ is an
integer, the relations in (\ref{eq:ucdf}) can all be satisfied
with equality by choosing $\bx$ as a Dirac comb $\bdel_{\sqrt{L}}$
with spacing $\sqrt{L}$, resulting in $\sqrt{L}$ nonzero
elements. This follows from the fact that the Fourier transform of
$\bdel_{\sqrt{L}}$ is also $\bdel_{\sqrt{L}}$.

We now develop an uncertainty relation for block-sparse
decompositions. Specifically, we
derive a result that is equivalent to (\ref{eq:ucd}) with $A$ and
$B$ replaced by block-sparsity levels as defined in (\ref{eq:mixed}), and $\mu(\bphi,\bpsi)$
replaced by the block-coherence between the orthonormal bases
considered, and defined below in (\ref{eq:bc2}).
\begin{theorem}
\label{thm:uncertainty} Let $\bphi,\bpsi$ be two unitary $L\,\times\,L$
matrices with $L\,\times\,d$ blocks $\{\bphi[\ell],\bpsi[\ell],1 \leq \ell \leq
R\}$ and let $\bx \in \CC^L$ satisfy
%have a decomposition in these bases as
\begin{equation}
\bx=\sum_{\ell=1}^R \bphi[\ell]\ba[\ell]=\sum_{\ell=1}^R
\bpsi[\ell]\bb[\ell].
\end{equation}
Let $A=\|\ba\|_{2,0}$ and $B=\|\bb\|_{2,0}$. Then,
\begin{equation}
\label{eq:uca} \frac{1}{2}(A+B)\geq \sqrt{AB} \geq \frac{1}{d
\mub(\bphi,\bpsi)}
\end{equation}
where
\begin{equation}
\label{eq:bc2} \mub(\bphi,\bpsi)=\max_{\ell,r}
\frac{1}{d}\rho(\bphi^H[\ell]\bpsi[r]).
\end{equation}
\end{theorem}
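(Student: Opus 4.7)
The plan is to mimic the classical scalar uncertainty proof of \cite{ElBr02,DH01}, replacing scalar coefficients by blocks and the absolute value $|\bphil_\ell^H\bpsil_r|$ by the spectral norm $\rho(\bphi^H[\ell]\bpsi[r])$, which is the quantity appearing in the block-coherence \eqref{eq:bc2}. The left inequality $\tfrac12(A+B)\ge\sqrt{AB}$ is just AM--GM and needs no comment, so the real work is in establishing $\sqrt{AB}\ge 1/(d\mub(\bphi,\bpsi))$.

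For the main inequality, I would exploit that $\bphi$ and $\bpsi$ are unitary, so their block columns satisfy $\bphi^H[\ell]\bphi[\ell']=\bbi_d\,\delta_{\ell\ell'}$ (and likewise for $\bpsi$), which gives both $\|\ba\|_2=\|\bb\|_2=\|\bx\|_2$ and the block-wise recovery identity
\begin{equation}
\ba[\ell]=\bphi^H[\ell]\bx=\sum_{r=1}^R \bphi^H[\ell]\bpsi[r]\,\bb[r].
\end{equation}
Applying the triangle inequality and the submultiplicativity of the spectral norm, together with the defining bound $\rho(\bphi^H[\ell]\bpsi[r])\le d\,\mub(\bphi,\bpsi)$ coming from \eqref{eq:bc2}, yields
\begin{equation}
\|\ba[\ell]\|_2\le d\,\mub(\bphi,\bpsi)\sum_{r:\,\bb[r]\neq\bo}\|\bb[r]\|_2.
\end{equation}

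A Cauchy--Schwarz step on the sum restricted to the $B$ nonzero blocks of $\bb$ then gives $\sum_r\|\bb[r]\|_2\le\sqrt{B}\,\|\bb\|_2$, hence $\|\ba[\ell]\|_2\le d\,\mub(\bphi,\bpsi)\sqrt{B}\,\|\bb\|_2$ for every $\ell$. Squaring and summing over the $A$ nonzero blocks of $\ba$ produces
\begin{equation}
\|\ba\|_2^2\le A\,B\,d^2\,\mub^2(\bphi,\bpsi)\,\|\bb\|_2^2,
\end{equation}
and dividing by $\|\bb\|_2^2=\|\ba\|_2^2$ (which is nonzero as soon as $\bx\neq\bo$) delivers $1\le AB\,d^2\mub^2(\bphi,\bpsi)$, i.e.\ the claimed bound.

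I do not anticipate a genuine obstacle: the argument is essentially the Elad--Bruckstein proof lifted from scalars to $d$-dimensional blocks, and each scalar step has a clean block analogue (absolute value $\to$ Euclidean block norm; $|\bphil_\ell^H\bpsil_r|\to\rho(\bphi^H[\ell]\bpsi[r])$). The only care needed is in invoking $\rho(\bphi^H[\ell]\bpsi[r])\le d\,\mub(\bphi,\bpsi)$ uniformly over all pairs $(\ell,r)$ --- note that the definition \eqref{eq:bc2} takes the maximum over all $\ell,r$ (not only $\ell\neq r$), which is exactly what is required because $\bphi$ and $\bpsi$ are different bases and no diagonal block is distinguished. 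Once this is observed, the proof is a short chain of inequalities.
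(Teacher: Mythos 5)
Your proof is correct, and it reaches the identical constant, but it is organized differently from the paper's argument. The paper expands the quantity $\|\bx\|_2^2=\sum_{\ell,r}\ba^H[\ell]\bphi^H[\ell]\bpsi[r]\bb[r]$ as a bilinear form, bounds each summand by $d\mub\|\ba[\ell]\|_2\|\bb[r]\|_2$ via Cauchy--Schwarz and the spectral norm, factors the double sum, and then applies Cauchy--Schwarz symmetrically to each factor to produce $\sqrt{A}$ and $\sqrt{B}$; the treatment of $\ba$ and $\bb$ is completely symmetric. You instead work with the analysis identity $\ba[\ell]=\sum_r\bphi^H[\ell]\bpsi[r]\bb[r]$, derive the uniform bound $\|\ba[\ell]\|_2\le d\mub\sqrt{B}\,\|\bb\|_2$ (one Cauchy--Schwarz on the $B$ nonzero blocks of $\bb$), and then convert this sup-type bound into an $\ell_2$ bound using $\|\ba\|_2^2\le A\max_\ell\|\ba[\ell]\|_2^2$, i.e.\ the support size of $\ba$ enters through a counting step rather than a second Cauchy--Schwarz. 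Both are faithful block-liftings of the Elad--Bruckstein scalar proof (absolute values replaced by block Euclidean norms, $|\bphil_\ell^H\bpsil_r|$ replaced by $\rho(\bphi^H[\ell]\bpsi[r])$), and each buys something small: the paper's version keeps $A$ and $B$ on an equal footing throughout, while yours isolates the inequality $\|\ba\|_{2,\infty}\le d\mub\|\bb\|_{2,1}$ as the single substantive estimate, which makes the role of block-coherence as an operator bound between mixed norms slightly more transparent. Your closing remark that the maximum in \eqref{eq:bc2} runs over all pairs $(\ell,r)$, including $\ell=r$, is the right point of care and is exactly what the paper's estimate also relies on.
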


Note that for $\bbd$ consisting of the orthonormal bases
$\bphi$ and $\bpsi$, i.e., $\bbd=[\bphi\,\,\bpsi]$, we have
$\mub(\bphi,\bpsi)=\mub$, where $\mub$ is as defined in
(\ref{eq:bc}) and associated with $\bbd=[\bphi\,\,\bpsi]$.

%As for the coherence, it can easily be shown that
%$\mub(\bphi,\bpsi)=\mub$, where $\mub$ is as defined in
%(\ref{eq:bc}) and associated with ${\bf D}=[\bphi\,\,\bpsi]$. In
%the remainder of the paper, we shall use $\mu_{B}$ to designate
%the block-coherence of a dictionary ${\bf D}$ which may or may not
%consist of two orthonormal bases.
\begin{IEEEproof}
Without loss of generality, we assume that $\|\bx\|_2^2=1$. Then,
\begin{align}
\label{eq:norm} 1=\|\bx\|_2^2&= \sum_{\ell,r=1}^R
\ba^H[\ell]\bba[\ell,r]\bb[r]\\ & \leq \sum_{\ell,r=1}^R
|\ba^H[\ell]\bba[\ell,r]\bb[r]|
\end{align}
where we set $\bba[\ell,r]=\bphi^H[\ell]\bpsi[r]$. Now, from
the Cauchy-Schwarz inequality, for any $\ba,\bb$,
\begin{align}
\left|\ba^H\bba[\ell,r]\bb\right| &\leq
\|\bb\|_2\|\bba^H[\ell,r]\ba\|_2 \nonumber\\
&  \leq  \lambda^{1/2}_{\max}(\bba[\ell,r]\bba^H[\ell,r]) \|\bb\|_2\|\ba\|_2\nonumber\\
&\leq d\mub \|\bb\|_2 \|\ba\|_2
\end{align}
%\begin{eqnarray}
%\lefteqn{\left|\ba^H\bba[\ell,r]\bb\right| \leq
%\|\bb\|_2\|\bba^H[\ell,r]\ba\|_2} \nonumber
%\\& \hspace*{-0.3in}  \leq \hspace*{-0.3in}& \lambda^{1/2}_{\max}(\bba[\ell,r]\bba^H[\ell,r]) \|\bb\|_2\|\ba\|_2
%\leq d\mub \|\bb\|_2 \|\ba\|_2,
%\end{eqnarray}
where, for brevity, we wrote $\mub=\mub(\bphi,\bpsi)$. Substituting
into (\ref{eq:norm}), we get
\begin{equation}
\label{eq:norm2} 1 \leq  d\mub
\sum_{r=1}^R\|\bb[r]\|_2\sum_{\ell=1}^R \|\ba[\ell]\|_2.
\end{equation}
Applying the Cauchy-Schwarz inequality yields
\begin{equation}
\label{eq:norm3} \sum_{r=1}^R\|\bb[r]\|_2 \leq \sqrt{B}
\bl\sum_{r=1}^R\|\bb[r]\|_2^2\br^{1/2}=\sqrt{B}
\end{equation}
where we used the fact that
$\sum_{r=1}^R\|\bb[r]\|_2^2=\|\bb\|_2^2=1$ since $\|\bx\|_2^2=1$
and $\bpsi$ is unitary. Similarly, we have that
$\sum_{r=1}^R\|\ba[r]\|_2 \leq \sqrt{A}$. Substituting into
(\ref{eq:norm2}) and using the inequality of
arithmetic and geometric means completes the proof.
\end{IEEEproof}

The bound provided by Theorem~\ref{thm:uncertainty} can be tighter
than that obtained by applying the conventional uncertainty
relation (\ref{eq:ucd}) to the block-sparse case.
This can be seen by using $\|\ba\|_{0} \leq  d \|\ba\|_{2,0}$ and
$\|\bb\|_{0} \leq  d \|\bb\|_{2,0}$ in (\ref{eq:ucd}) to obtain
\begin{equation}
\sqrt{\|\ba\|_{2,0}\|\bb\|_{2,0}} \geq \frac{1}{d
\mu}.
\end{equation}
Since $\mub\,\le\,\mu$, this bound may be looser than
(\ref{eq:uca}).

\subsection{Block-incoherent dictionaries}
\label{sec:uca}

As already noted, in the sparse case (i.e., $d=1$) for any two orthonormal bases ${\bf \Phi}$ and ${\bf \Psi}$, we have $\mu\,\ge\,1/\sqrt{L}$. We next
show that the block-coherence satisfies a similar inequality, namely $\mub\,\ge\,1/\sqrt{dL}$.
%Evidently, the lower bound on $\mu$ is $\sqrt{d}$ times larger
%than that on $\mub$.
\begin{proposition}
\label{prop:lb} The block-coherence \eqref{eq:bc2} satisfies
$\mub\,\ge\,1/\sqrt{dL}$.
\end{proposition}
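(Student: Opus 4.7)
The plan is to mimic the classical lower bound argument $\mu(\bphi,\bpsi)\ge 1/\sqrt{L}$ (recalled just before the statement), but carried out at the level of blocks, combining a Frobenius-norm accounting with the standard inequality between Frobenius and spectral norm of a $d\times d$ matrix.

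First I would observe that since $\bphi$ and $\bpsi$ are unitary, so is $\bbM=\bphi^H\bpsi$, and therefore
\begin{equation*}
\|\bbM\|_F^2 \;=\; \tr(\bbM^H\bbM) \;=\; \tr(\bbi_L) \;=\; L.
\end{equation*}
Writing $\bbM[\ell,r]=\bphi^H[\ell]\bpsi[r]$ and splitting $\bbM$ into its $R\times R$ grid of $d\times d$ blocks (with $R=L/d$) gives
\begin{equation*}
\sum_{\ell,r=1}^{R} \|\bbM[\ell,r]\|_F^2 \;=\; L.
\end{equation*}
By averaging over the $R^2=L^2/d^2$ blocks, at least one block $(\ell^\star,r^\star)$ satisfies
\begin{equation*}
\|\bbM[\ell^\star,r^\star]\|_F^2 \;\ge\; \frac{L}{R^2} \;=\; \frac{d^2}{L}.
\end{equation*}

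Next I would use the standard relation between Frobenius and spectral norm of a $d\times d$ matrix, namely $\|\bbM[\ell,r]\|_F^2=\sum_{i=1}^d \sigma_i^2(\bbM[\ell,r]) \le d\,\rho^2(\bbM[\ell,r])$, to conclude that
\begin{equation*}
\rho(\bbM[\ell^\star,r^\star]) \;\ge\; \frac{1}{\sqrt{d}}\,\|\bbM[\ell^\star,r^\star]\|_F \;\ge\; \sqrt{\frac{d}{L}}.
\end{equation*}
Since the definition \eqref{eq:bc2} maximizes over all $(\ell,r)$, dividing by $d$ yields
\begin{equation*}
\mub(\bphi,\bpsi) \;\ge\; \frac{1}{d}\,\rho(\bbM[\ell^\star,r^\star]) \;\ge\; \frac{1}{\sqrt{dL}},
\end{equation*}
which is the claim.

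There is no real obstacle here; the only mildly delicate point is tracking the two sources of the factor $d$: one coming from the number of blocks $R=L/d$ (which gives $d^2/L$ in the Frobenius bound), and the other from converting Frobenius norm to spectral norm of a $d\times d$ matrix (which loses a factor $d$), so that only one net factor of $d$ remains inside the square root, together with the explicit $1/d$ normalization built into the definition of $\mub$. This is precisely what makes the $1/\sqrt{dL}$ bound tighter than the naive $1/\sqrt{L}$ one would obtain by ignoring block structure.
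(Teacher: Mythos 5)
Your proof is correct, and every step checks out: the Frobenius mass of the unitary matrix $\bbM=\bphi^H\bpsi$ is $L$, the pigeonhole over the $R^2$ blocks gives $\|\bbM[\ell^\star,r^\star]\|_F^2\ge d^2/L$, and the rank-$d$ inequality $\|\cdot\|_F^2\le d\,\rho^2(\cdot)$ converts this into $\rho(\bbM[\ell^\star,r^\star])\ge\sqrt{d/L}$, hence $\mub\ge 1/\sqrt{dL}$ after the $1/d$ normalization. The overall strategy --- an averaging argument over the $R\times R$ grid of $d\times d$ blocks of $\bphi^H\bpsi$, exploiting unitarity to pin down the total --- is the same as the paper's, but the implementation differs in the key inequality. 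The paper writes $R^2\mub^2\ge\sum_{\ell,r}\frac{1}{d^2}\lambda_{\max}(\bba^H[\ell,r]\bba[\ell,r])\ge\frac{1}{d^2}\lambda_{\max}\bigl(\sum_{\ell,r}\bba^H[\ell,r]\bba[\ell,r]\bigr)$, i.e.\ it uses subadditivity of $\lambda_{\max}$ over positive-semidefinite summands, and then evaluates the inner sum exactly as $R\,\bbi_d$ using $\sum_\ell\bphi[\ell]\bphi^H[\ell]=\bbi_L$ and $\bpsi^H[r]\bpsi[r]=\bbi_d$. You instead track only the trace of that sum (the Frobenius accounting), which is the more literal generalization of the scalar argument $\sum_{\ell,r}|\bphil_\ell^H\bpsil_r|^2=L$ that the paper recalls for $d=1$, at the price of invoking $\|\cdot\|_F\le\sqrt{d}\,\rho(\cdot)$ on a single block. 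Since $\lambda_{\max}(R\,\bbi_d)=R=\tr(R\,\bbi_d)/d$, the two bookkeepings lose exactly the same factor of $d$ and yield identical bounds; your route is slightly more elementary (no spectral subadditivity needed), while the paper's makes visible that the block Gram sums to a multiple of the identity, which is where the extremality of the pair $(\bbi_L,\bbf\otimes\bbu_d)$ comes from.
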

\begin{IEEEproof}
Let ${\bf \Phi}$ and ${\bf
\Psi}$ be two orthonormal bases for $\CC^L$ and let
$\bba=\bphi^H\bpsi$ with $\bba[\ell,r]$ denoting the $(\ell,
r)$th $d\,\times\, d$ block of ${\bf A}$. With $R=L/d$, we have
\begin{eqnarray}
\label{eq:bco2} R^2 \mub^2 & \geq & \sum_{\ell=1}^R \sum_{r=1}^R
\frac{1}{d^2}\,\lambda_{\max}(\bba^H[\ell,r]\bba[\ell,r]) \nonumber
\\ &\geq &\frac{1}{d^2}\,\lambda_{\max}\!\bl \sum_{\ell=1}^R
\sum_{r=1}^R \bba^H[\ell,r]\bba[\ell,r]\br.
\end{eqnarray}
Now, it holds that
\begin{equation}
\label{eq:bco3} \sum_{\ell=1}^R \sum_{r=1}^R
\bba^H[\ell,r]\bba[\ell,r]= \sum_{r=1}^R \bpsi^H[r]\bl
\sum_{\ell=1}^R \bphi[\ell]\bphi^H[\ell] \br \bpsi[r].
\end{equation}
\sloppy Since $\bphi$ is a square matrix consisting of orthonormal columns, we have
$\sum_{\ell=1}^{R} \bphi[\ell]\bphi^H[\ell]=\bphi\bphi^H=\bbi_L$.
Furthermore, since $\bpsi[r]$ consists of orthonormal columns, for
each $r$, we have $\bpsi^H[r]\bpsi[r]= \bbi_d$. Therefore,
(\ref{eq:bco2}) becomes
\begin{equation}
\label{eq:bco4}   \mub^2 \geq \frac{1}{d^2 R}=\frac{1}{dL}
\end{equation}
which concludes the proof.
\end{IEEEproof}

We now construct a pair of bases that achieves the lower bound in (\ref{eq:bco4}) and
therefore has the smallest possible block-coherence. Let $\bbf$
be the DFT matrix of size $R=L/d$ with $\bbf_{\ell, r}=
(1/\sqrt{R})\exp(j 2\pi \ell r/R)$. Define $\bphi=\bbi_L$ and
\begin{equation}
\label{eq:kron} \bpsi=\bbf \otimes \bbu_d
\end{equation}
where $\bbu_d$ is an arbitrary $d\,\times\,d$ unitary matrix. For this
choice, \sloppy $\bphi^H[\ell]\bpsi[r]=\bbf_{\ell, r} \bbu_d$.
Since $\rho(\bbu_d)=1$ and $|\bbf_{\ell, r}|=1/\sqrt{R}$, we get
\begin{equation}
\label{eq:mubl} \mub=\frac{1}{d\sqrt{R}}=\frac{1}{\sqrt{dL}}.
\end{equation}
When $d=1$, this basis pair reduces to the spike-Fourier pair
which is well known to be maximally incoherent \cite{DH01}.

When $\mub$ satisfies (\ref{eq:mubl}) the uncertainty relation
becomes
\begin{equation}
\label{eq:ucdfb} A+B\geq 2 \sqrt{AB}\geq 2\sqrt{R}.
\end{equation}
If $\sqrt{R}$ is integer, the inequalities in (\ref{eq:ucdfb}) are met with equality for
the signal $\bx=\bdel_{\sqrt{R}} \otimes \bc$
where $\bc$ is an arbitrary nonzero length-$d$ vector. 
Indeed, in this case, the representation of $\bx$ in the spike
basis requires $\sqrt{R}$ blocks (of size $d$), so that
$\|\ba\|_{2,0}=\sqrt{R}$. The representation of $\bx$ in the basis $\bpsi$ in (\ref{eq:kron})
is obtained as
\begin{equation}
\bb=(\bbf^H \otimes \bbu_d^H)(\bdel_{\sqrt{R}} \otimes
\bc)=\bdel_{\sqrt{R}} \otimes \bbu_d^H\bc
\end{equation}
where we used the fact that the Fourier transform of
$\bdel_{\sqrt{R}}$ is also $\bdel_{\sqrt{R}}$. Therefore, $\bb$
has $\sqrt{R}$ nonzero blocks so that $\|\bb\|_{2,0}=\sqrt{R}$ and hence $A=B=\sqrt{R}$, which implies
that all inequalities in (\ref{eq:ucdfb}) are met with equality.

\section{Efficient Recovery Algorithms}
\label{sec:er}

We now give operational meaning to block-coherence by showing that
if it is small enough, then a block-sparse signal $\bx$ can be
recovered from the measurements ${\bf y}=\bbd \bx$ using computationally
efficient algorithms. We consider two different recovery methods, namely
the mixed $\ell_2/\ell_1$-optimization program (L-OPT) proposed in
\cite{EM082}:
\begin{eqnarray}
\label{eq:l1} \min_{\bx}  \sum_{\ell=1}^M \|\bx[\ell]\|_2 \quad
\st  \by=\bbd\bx
\end{eqnarray}
and an extension of the OMP algorithm \cite{Mallat93} to the
block-sparse case described below and termed block-OMP (BOMP). We
then derive thresholds on the block-sparsity level as a function of $\mub$ and $\nu$ for
both methods to recover the correct block-sparse $\bx$.
For L-OPT this complements the results in \cite{EM082}
that establish the recovery capabilities of L-OPT under the
condition that $\bbd$ satisfies a block-RIP
with a small enough restricted isometry constant. For the
special case of the columns of $\bbd[\ell]$ being orthonormal for
each $\ell$, we suggest a block-version of the MP 
algorithm \cite{Mallat93}, termed block-MP (BMP).

\subsection{Block OMP and block MP}

The BOMP algorithm begins by initializing the residual as $\brv_0=\by$. At the $\ell$th stage ($\ell \geq 1$)
we choose the block that is best matched to
$\brv_{\ell-1}$ according to:
\begin{equation}
\label{eq:ix} i_{\ell}=\argmax_{i} \|\bbd^H[i] \brv_{\ell-1}\|_2.
\end{equation}
%Since we assume that $\bbd[i]$ consists of orthonormal columns for
%all $i$, the index $i_{\ell}$ can equivalently be computed as the
%value that maximizes the norm of the projection:
%\begin{equation}
%\label{eq:ixp} i_{\ell}=\argmax_{i} \|\bbp[i] \brv_{\ell-1}\|_2,
%\end{equation}
%where $\bbp[i]=\bbd[i]\bbd^H[i]$ is the orthogonal projection onto
%the subspace spanned by the columns $\bbd[i]$.
Once the index $i_\ell$ is chosen, we find $\bx_{\ell}[i]$ as the solution to
\begin{equation}
\label{eq:lsq} \min \left\|\by-\sum_{i \in \I}\bbd[i]\bx_{\ell}[i]\right\|_{2}
\end{equation}
where $\I$ is the set of chosen indices $i_j,1 \leq j \leq \ell$. The
residual is then updated as
\begin{equation}
\label{eq:rlsq} \brv_{\ell}=\by-\sum_{i \in \I}\bbd[i]\bx_{\ell}[i].
\end{equation}
%Alternatively, we can express the residual as
%\begin{equation}
%\brv_{\ell}=(\bbi-\bbp_{\ell})\by,
%\end{equation}
%where $\bbp_0=\bo$ and
%\begin{equation}
%\bbp_{\ell}=\bbp_{\ell-1}+\bbv_{\ell}\bbv_{\ell}^\dagger,
%\end{equation}
%with $ \bbv_{\ell}=(\bbi-\bbp_{\ell-1})\bbd[i_\ell]$. Here
%$\bbp_{\ell}$ is the orthogonal projection onto the space spanned
%by the columns of $\{\bbd[i_{k}]\}_{k=1}^{\ell}$.

In the special case of the columns of $\bbd[\ell]$ being
orthonormal for each $\ell$ (the elements across different blocks
do not have to be orthonormal), we consider an extension of the MP
algorithm to the block-case. The resulting
algorithm, termed BMP, starts by initializing the residual as
$\brv_0=\by$ and at the $\ell$th stage ($\ell \geq 1$) chooses the
block that is best matched to $\brv_{\ell-1}$ according to
\eqref{eq:ix}. Then, however, the algorithm does not perform a
least-squares minimization over the blocks that have already been
selected, but directly updates the residual according to
\begin{align}
        \brv_{\ell}=\brv_{\ell-1}-\bbd[i_\ell]\bbd^H[i_\ell]\brv_{\ell-1}.
\end{align}

\subsection{Recovery conditions}

Our main result, summarized in Theorems~\ref{thm:sc} and
\ref{thm:mu} below, is that any block $k$-sparse vector $\bx$ can
be recovered from measurements $\by=\bbd\bx$ using either the BOMP
algorithm or L-OPT if the block-coherence  satisfies
$kd<(\mub^{-1}+d-(d-1)\nu\mub^{-1})/2$. In the special case of the
columns of $\bbD[\ell]$ being orthonormal for each $\ell$, we have
$\nu=0$ and therefore the recovery condition becomes
$kd<(\mub^{-1}+d)/2$. In this case BMP exhibits exponential convergence rate (see Theorem \ref{thm:BMP}).
If the block-sparse vector $\bx$ was treated as a
(conventional) $kd$-sparse vector without exploiting knowledge of
the block-sparsity structure, a sufficient condition for perfect
recovery using OMP \cite{Tropp04} or (\ref{eq:l1}) for $d=1$
(known as BP) is $kd<(\mu^{-1}+1)/2$. Comparing with
$kd\,<\,(\mub^{-1}+d)/2$, we can see that, thanks to
$\mub\,\le\,\mu$, making explicit use of block-sparsity leads to
guaranteed recovery for a potentially higher sparsity level.
Later, we will establish conditions for such a result to hold even
when $\nu\,\neq\,0$.

To formally state our main results, suppose that $\bx_0$ is a length-$N$
block $k$-sparse vector, and let $\by=\bbd\bx_0$.
%orthonormal columns.
 Let $\bbd_0$ denote the $L \times (kd)$ matrix whose blocks
correspond to the nonzero blocks of $\bx_0$, and let $\obbd_0$ be
the matrix of size $L \times (N-kd)$ which contains the $L\,\times\,d$ blocks of
$\bbd$ that are not in $\bbd_0$. We then have the following theorem proved in Section~\ref{sec:proofth2}.
\begin{theorem}
\label{thm:sc} Let $\bx_0\,\in\,\CC^{N}$ be a block $k$-sparse vector
with blocks of length $d$, and let $\by=\bbd\bx_0$ for a given $L
\times N$ matrix $\bbd$. A sufficient condition for the BOMP and the
L-OPT algorithm to recover $\bx_0$ is that
\begin{equation}
\label{eq:sc} \rho_c(\bbd_0^\dagger \obbd_0) <1
\end{equation}
where
\begin{equation}\label{eq:rhocdefi}
\rho_c(\bba)=\max_{r} \sum_\ell \rho(\bba[\ell,r])
\end{equation}
and $\bba[\ell,r]$ is the $(\ell,r)$th $d\,\times\,d$ block of
$\bba$. In this case, BOMP picks up a correct new block in each step,
and consequently converges in at most $k$ steps.
\end{theorem}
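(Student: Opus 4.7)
The plan is to extend Tropp's exact-recovery-condition argument~\cite{Tropp04} from the sparse to the block-sparse setting, handling BOMP and L-OPT in parallel via one common block-wise operator inequality. Let $T$ denote the set of block indices on which $\bx_0$ is nonzero, partition $\bbd=[\bbd_0\;\obbd_0]$ accordingly, and note that $\bbd_0$ has full column rank (by Proposition~\ref{prop:inv}) so that $\bbd_0^\dagger=(\bbd_0^H\bbd_0)^{-1}\bbd_0^H$ and the identity $\obbd_0^H\bbd_0=(\bbd_0^\dagger\obbd_0)^H\bbd_0^H\bbd_0$ holds. Setting $\bba=\bbd_0^\dagger\obbd_0$ with $d\times d$ sub-blocks $\bba[\ell,r]$, submultiplicativity of the spectral norm yields $\|(\bba^H\bu)[j]\|_2\leq\sum_\ell\rho(\bba[\ell,j])\|\bu[\ell]\|_2$ and analogously for $\bba\bv$. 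Taking a maximum over $j$ in the first and summing the second over $\ell$ produces the two operator bounds I will use:
\[
\max_j \|(\bba^H\bu)[j]\|_2 \leq \rho_c(\bba)\max_\ell\|\bu[\ell]\|_2,\qquad \|\bba\bv\|_{2,1}\leq\rho_c(\bba)\|\bv\|_{2,1}.
\]

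For BOMP, I will argue inductively that $\brv_{\ell-1}\in\R(\bbd_0)$ at every step. Initially $\brv_0=\by\in\R(\bbd_0)$, and if the previously selected indices all lie in $T$ the least-squares update in \eqref{eq:lsq}--\eqref{eq:rlsq} preserves this. Writing $\brv_{\ell-1}=\bbd_0\bc$ and invoking the pseudo-inverse identity gives $\obbd_0^H\brv_{\ell-1}=(\bbd_0^\dagger\obbd_0)^H\bbd_0^H\brv_{\ell-1}$, so the first operator bound above yields
\[
\max_{j\notin T}\|\bbd^H[j]\brv_{\ell-1}\|_2 \;\leq\; \rho_c(\bbd_0^\dagger\obbd_0)\,\max_{i\in T}\|\bbd^H[i]\brv_{\ell-1}\|_2.
\]
Under the hypothesis $\rho_c(\bbd_0^\dagger\obbd_0)<1$, the greedy rule \eqref{eq:ix} must then select an index in $T$ whenever $\brv_{\ell-1}\neq\bo$, and this index is new since the least-squares residual is orthogonal to every block already in $\I$. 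Full column rank of $\bbd_0$ forces $\brv_{\ell-1}\neq\bo$ until $\I=T$, so BOMP picks a fresh correct block at each of the first $k$ iterations and the final projection reproduces $\bx_0$ exactly.

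For L-OPT I will show that every nonzero $\bh\in\ker(\bbd)$ strictly increases the mixed $\ell_2/\ell_1$-objective. A block-wise triangle inequality gives $\|\bx_0+\bh\|_{2,1}\geq\|\bx_0\|_{2,1}+\sum_{\ell\notin T}\|\bh[\ell]\|_2-\sum_{\ell\in T}\|\bh[\ell]\|_2$, reducing the task to $\sum_{\ell\in T}\|\bh[\ell]\|_2 < \sum_{\ell\notin T}\|\bh[\ell]\|_2$. Partitioning $\bh$ into $(\bh_T,\bh_{T^c})$, the constraint $\bbd\bh=\bo$ yields $\bh_T=-\bbd_0^\dagger\obbd_0\,\bh_{T^c}$, and the second operator bound above gives $\sum_{\ell\in T}\|\bh[\ell]\|_2 \leq \rho_c(\bbd_0^\dagger\obbd_0)\sum_{\ell\notin T}\|\bh[\ell]\|_2$. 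Since $\bh\neq\bo$ forces $\bh_{T^c}\neq\bo$ (full column rank of $\bbd_0$ again), the strict inequality follows from $\rho_c(\bbd_0^\dagger\obbd_0)<1$.

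The main obstacle, and the ingredient genuinely new relative to the scalar case, is identifying the correct block-matrix ``induced norm'' that takes over from Tropp's $\ell_1\to\ell_1$ operator norm. Once one recognizes that the spectral norm $\rho(\cdot)$ is the natural scalar surrogate for each $d\times d$ sub-block, and that the right block-wise induced norm is precisely the max-column-sum $\rho_c(\cdot)$ in \eqref{eq:rhocdefi}, both recovery arguments reduce to direct block-wise analogues of the scalar proofs, with the inductive residual-containment for BOMP and the null-space dimension check for L-OPT being routine verifications.
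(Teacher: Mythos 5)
Your BOMP argument is essentially the paper's own: the inductive claim that the residual stays in $\R(\bbd_0)$, the identity $\obbd_0^H\brv_{\ell-1}=(\bbd_0^\dagger\obbd_0)^H\bbd_0^H\brv_{\ell-1}$ obtained from the orthogonal projector $\bbd_0\bbd_0^\dagger$, the bound by $\rho_r(\obbd_0^H(\bbd_0^\dagger)^H)=\rho_c(\bbd_0^\dagger\obbd_0)$, and the orthogonality-of-the-residual argument guaranteeing a fresh block at each step all match the paper's proof step for step. Where you genuinely diverge is L-OPT. The paper compares two explicit representations $\by=\bbd_0\bc_0=\bbd'\bc'$, writes $\bc_0=\bbd_0^\dagger\bbd'\bc'$, and invokes its Lemma~\ref{lemma:1inq} together with a case distinction on whether the blocks of $\bbd_0^\dagger\bbd'$ all have equal $\rho_c$ (needed because $\rho_c(\bbd_0^\dagger\bbd')$ can equal $1$ when $\bbd'$ shares blocks with $\bbd_0$, so strictness must be extracted from one inequality or the other). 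You instead prove a block null-space property: for $\bo\neq\bh\in\ker(\bbd)$, the relation $\bh_T=-\bbd_0^\dagger\obbd_0\bh_{T^c}$ plus $\|\bba\bv\|_{2,1}\le\rho_c(\bba)\|\bv\|_{2,1}$ gives $\sum_{\ell\in T}\|\bh[\ell]\|_2<\sum_{\ell\notin T}\|\bh[\ell]\|_2$, and the block triangle inequality finishes. This is correct and arguably cleaner: it avoids the equal-versus-unequal $\rho_c(\bba\bbj_\ell)$ case analysis entirely (strictness comes directly from $\rho_c(\bbd_0^\dagger\obbd_0)<1$ and $\bh_{T^c}\neq\bo$), and it transparently handles \emph{every} feasible competitor of the program, not only block $k$-sparse ones. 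The price is that you do not recover the slightly finer strictness statement of Lemma~\ref{lemma:1inq} itself, but that is not needed for the theorem. Both routes rest on the same two induced-norm inequalities of Lemma~\ref{lemma:norms}, which you rederive correctly.
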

Note that
\begin{equation}
\rho_c( \bbd_0^\dagger \obbd_0)=\max_{r} \rho_c(\bbd_0^\dagger
\obbd_0[r]).
\end{equation}
Therefore, (\ref{eq:sc}) implies that for all $r$,
\begin{equation}
\label{eq:scs} \rho_c(\bbd_0^\dagger \obbd_0[r])<1.
\end{equation}

The sufficient condition (\ref{eq:sc}) depends on $\bbd_{0}$
and therefore on the location of the nonzero blocks in $\bx_{0}$,
which, of course, is not known in advance. Nonetheless, as the
following theorem, proved in Section~\ref{sec:proofth2}, shows,
(\ref{eq:sc}) holds universally under certain conditions on $\mub$ and $\nu$ associated with
the dictionary $\bbd$.
\begin{theorem}
\label{thm:mu} Let $\mub$ be the block-coherence and $\nu$ the sub-coherence of the dictionary $\bbd$. Then (\ref{eq:sc}) is satisfied if
%\begin{equation}
%\label{eq:muc1} kd<\frac{1}{2} (\mub^{-1}+d).
%\end{equation}
\begin{equation}
\label{eq:muc1} kd<\frac{1}{2} \left(\mub^{-1}+d-(d-1)\frac{\nu}{\mub}\right).
\end{equation}
\end{theorem}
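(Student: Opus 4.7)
The plan is to factor $\bbd_0^\dagger = (\bbd_0^H\bbd_0)^{-1}\bbd_0^H$---valid because the hypothesis of Proposition~\ref{prop:inv} forces any $k$ blocks of $\bbd$ to have linearly independent columns so that $\bbd_0$ has full column rank---and to control the two factors of $(\bbd_0^H\bbd_0)^{-1}(\bbd_0^H\obbd_0)$ separately in the $\rho_c$ quasi-norm, then combine them multiplicatively. The target is to show that the resulting bound is strictly less than $1$ under the stated hypothesis, which by Theorem~\ref{thm:sc} yields the recovery guarantee.

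First I would verify that $\rho_c$ is submultiplicative on block matrices. Expanding $(\bba\bbb)[\ell,r]=\sum_k \bba[\ell,k]\bbb[k,r]$, applying the triangle inequality and submultiplicativity of the spectral norm, and grouping by $r$, one obtains $\rho_c(\bba\bbb)\leq \rho_c(\bba)\,\rho_c(\bbb)$; in fact $\rho_c$ is the operator norm induced by the block $\ell_1/\ell_2$ norm $\|\bv\|_*=\sum_\ell\|\bv[\ell]\|_2$, so submultiplicativity is automatic. Next I would bound the cross-term: every block of $\bbd_0^H\obbd_0$ is of the form $\bbd^H[\ell_1]\bbd[r_1]$ with $\ell_1\neq r_1$, hence of spectral norm at most $d\mub$ by definition~\eqref{eq:bc}, and each column contains exactly $k$ such blocks, so $\rho_c(\bbd_0^H\obbd_0)\leq kd\mub$.

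The inverse factor I would handle via a Neumann series. Write $\bbd_0^H\bbd_0 = \bbi+\bbF$. The diagonal blocks $\bbF[\ell,\ell]=\bbd_0^H[\ell]\bbd_0[\ell]-\bbi_d$ are Hermitian with zero diagonal and off-diagonal entries of magnitude at most $\nu$, so by Ger\v{s}gorin their spectral norm is at most $(d-1)\nu$; the off-diagonal blocks have spectral norm at most $d\mub$. Each column of $\bbF$ thus contributes at most $(d-1)\nu + (k-1)d\mub$ to the $\rho_c$ sum, giving $\rho_c(\bbF)\leq (d-1)\nu+(k-1)d\mub$. When this is strictly less than $1$, $(\bbi+\bbF)^{-1}=\sum_{n\geq 0}(-\bbF)^n$ converges in $\rho_c$ and $\rho_c((\bbd_0^H\bbd_0)^{-1})\leq 1/\!\left(1-(d-1)\nu-(k-1)d\mub\right)$.

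Combining via submultiplicativity,
\[
\rho_c(\bbd_0^\dagger\obbd_0)\leq\frac{kd\mub}{1-(d-1)\nu-(k-1)d\mub},
\]
and requiring the right-hand side to be strictly less than $1$ rearranges into $(2k-1)d\mub<1-(d-1)\nu$, which is exactly \eqref{eq:muc1}; this same inequality implies $\rho_c(\bbF)<1$, so the Neumann series is legitimate a posteriori. The only genuinely delicate step is the Ger\v{s}gorin bound on the diagonal blocks, which relies on the normalization $\|\bd_r\|_2=1$ to force unit diagonals in $\bbd_0^H[\ell]\bbd_0[\ell]$ and thereby separate the contributions of sub-coherence (local, inside a block) from those of block-coherence (global, between blocks); everything else is bookkeeping.
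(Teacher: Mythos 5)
Your proposal is correct and follows essentially the same route as the paper: factor $\bbd_0^\dagger=(\bbd_0^H\bbd_0)^{-1}\bbd_0^H$, bound the cross term by $kd\mub$, control the inverse via a Neumann series after showing $\rho_c(\bbF)\le (d-1)\nu+(k-1)d\mub$ (Ger\v{s}gorin on the diagonal blocks), and combine by submultiplicativity of $\rho_c$. One small caveat: $\rho_c$ is only an \emph{upper bound} on the operator norm induced by $\|\cdot\|_{2,1}$ (cf.~Lemma~\ref{lemma:norms}), not equal to it, so submultiplicativity is not ``automatic'' from that identification---but your direct block-expansion verification is the correct justification and coincides with the paper's Lemma~\ref{lemma:mnorm}.
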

\noindent For $d=1$, and therefore $\nu=0$, we recover the
corresponding condition $k\,<\,(\mu^{-1}+1)/2$ reported in
\cite{Tropp04,DE03}. In the special case where the columns of
$\bbd[\ell]$ are orthonormal for each $\ell$, we have $\nu=0$ and
\eqref{eq:muc1} becomes
\begin{align}
\label{eq:orthc}
        kd<\frac{1}{2}(\mub^{-1}+d).
\end{align}
%This bound is always larger than that obtained in \cite{Tropp04}.

The next theorem shows that under condition (\ref{eq:orthc}),
BMP exhibits exponential convergence rate in the case where each block $\bbd[\ell]$ consists of
orthonormal columns.
\begin{theorem}\label{thm:BMP}
If $\bbd^H[\ell]\bbd[\ell]=\bbi_d$, for all $\ell$, and
$kd\,<\,(\mub^{-1}+d)/2$, then we have:
\begin{enumerate}
\item BMP picks up a correct block in each step.
\item The energy of the residual decays exponentially, i.e., $\|\brv_{\ell}\|_2^2\leq \beta^\ell \|\brv_{0}\|_2^2$ with
\begin{align}
        \beta = 1-\frac{1-(k-1)d\mub}{k}.
\end{align}
%\begin{align}
%       \beta = 1-(1-(d-1)\nu)\frac{1-(d-1)\nu-(k-1)d\mub}{k}.
%\end{align}
\end{enumerate}
\end{theorem}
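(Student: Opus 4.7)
The plan is to prove the two parts in order, using an invariant that the residual $\brv_{\ell-1}$ always lies in the span of the correct (nonzero) blocks of $\bx_0$.

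For part~1, I would argue by induction on $\ell$. The base case $\brv_0=\by=\bbd\bx_0\in\R(\bbd_0)$ is immediate, and since $\bbd[i_\ell]\bbd^H[i_\ell]\brv_{\ell-1}\in\R(\bbd[i_\ell])\subseteq\R(\bbd_0)$ whenever $i_\ell\in\I_0$, the invariant is preserved provided the greedy selection picks a correct block. Write $\brv_{\ell-1}=\sum_{i\in\I_0}\bbd[i]\bc_i$ and let $i^\star\in\I_0$ attain $\max_i\|\bc_i\|_2$. Using $\bbd^H[i^\star]\bbd[i^\star]=\bbi_d$ together with $\rho(\bbd^H[i]\bbd[j])\leq d\mub$ for $i\neq j$, a straightforward triangle-inequality computation gives
\begin{align}
\|\bbd^H[i^\star]\brv_{\ell-1}\|_2 &\geq \bigl(1-(k-1)d\mub\bigr)\|\bc_{i^\star}\|_2,\\
\|\bbd^H[j]\brv_{\ell-1}\|_2 &\leq k d\mub\,\|\bc_{i^\star}\|_2,\quad j\notin\I_0.
\end{align}
Under $kd<(\mub^{-1}+d)/2$, i.e., $(2k-1)d\mub<1$, the first quantity exceeds the second, so the selection rule \eqref{eq:ix} must pick an index in $\I_0$, closing the induction.

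For part~2, because each $\bbd[i_\ell]$ has orthonormal columns, $\bbd[i_\ell]\bbd^H[i_\ell]$ is an orthogonal projector and therefore
\begin{equation}
\|\brv_\ell\|_2^2=\|\brv_{\ell-1}\|_2^2-\|\bbd^H[i_\ell]\brv_{\ell-1}\|_2^2.
\end{equation}
The core task is to lower-bound $\|\bbd^H[i_\ell]\brv_{\ell-1}\|_2^2$ by a constant multiple of $\|\brv_{\ell-1}\|_2^2$. Since BMP chooses the $\ell_2$-maximizing block, I would use the obvious max-vs.-average bound
\begin{equation}
\|\bbd^H[i_\ell]\brv_{\ell-1}\|_2^2 \geq \frac{1}{k}\sum_{i\in\I_0}\|\bbd^H[i]\brv_{\ell-1}\|_2^2 = \frac{1}{k}\|\bbd_0^H\brv_{\ell-1}\|_2^2,
\end{equation}
which contributes the $1/k$ factor in $\beta$.

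To finish, I would control the block Gram matrix $\bbG=\bbd_0^H\bbd_0$. Its diagonal blocks equal $\bbi_d$ and its off-diagonal blocks have spectral norm at most $d\mub$, so a short computation bounding $|\bz^H(\bbG-\bbi)\bz|$ by $d\mub\bigl[(\sum_i\|\bz^{(i)}\|_2)^2-\sum_i\|\bz^{(i)}\|_2^2\bigr]\leq (k-1)d\mub\|\bz\|_2^2$ shows that the eigenvalues of $\bbG$ lie in $[1-(k-1)d\mub,\,1+(k-1)d\mub]$. Writing $\brv_{\ell-1}=\bbd_0\bz$ gives $\|\bbd_0^H\brv_{\ell-1}\|_2^2=\bz^H\bbG^2\bz\geq(1-(k-1)d\mub)\bz^H\bbG\bz=(1-(k-1)d\mub)\|\brv_{\ell-1}\|_2^2$. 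Combining the two estimates yields $\|\brv_\ell\|_2^2\leq\beta\|\brv_{\ell-1}\|_2^2$ with $\beta=1-(1-(k-1)d\mub)/k$, and iterating gives the claim. The main obstacle is obtaining the exact stated constant $\beta$: one has to resist looser bounds (e.g., a Cauchy--Schwarz step that introduces an extra $1+(k-1)d\mub$ in the denominator) and instead route the argument through $\bz^H\bbG^2\bz\geq\lambda_{\min}(\bbG)\,\bz^H\bbG\bz$, which exploits that $\bbG\succ 0$ under condition \eqref{eq:orthc}.
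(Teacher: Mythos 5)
Your proof is correct and delivers the exact constant $\beta$, but it takes a genuinely different route from the paper's in both parts. For part~1 the paper simply invokes the machinery of Theorems~\ref{thm:sc} and \ref{thm:mu} with $\nu=0$ (i.e., the bound $z(\brv_{\ell-1})\le\rho_c(\bbd_0^\dagger\obbd_0)<1$ obtained via the pseudo-inverse identity and the mixed matrix norms of Lemma~\ref{lemma:norms}), whereas you give a self-contained greedy-selection argument directly from the coherence bounds $\|\bbd^H[i^\star]\brv_{\ell-1}\|_2\ge(1-(k-1)d\mub)\|\bc_{i^\star}\|_2$ versus $kd\mub\|\bc_{i^\star}\|_2$ off the support; this is more elementary and makes the role of the threshold $(2k-1)d\mub<1$ transparent, at the cost of not reusing the L-OPT analysis. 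For part~2 the paper routes the estimate through Lemma~\ref{lem:deVore} (a block version of the DeVore--Temlyakov inequality), namely $\max_i\|\bbd_0^H[i]\brv_\ell\|_2\ge\|\brv_\ell\|_2^2/\|\bc_\ell\|_{2,1}$, and then upper-bounds $\|\bc_\ell\|_{2,1}^2$ by $k\,\|\brv_\ell\|_2^2/(1-(k-1)d\mub)$; you instead use the max-versus-average inequality $\max_i\|\bbd_0^H[i]\brv_{\ell-1}\|_2^2\ge\tfrac1k\|\bbd_0^H\brv_{\ell-1}\|_2^2$ together with the spectral estimate $\bz^H\bbG^2\bz\ge\lambda_{\min}(\bbG)\,\bz^H\bbG\bz$ for the Gram matrix $\bbG=\bbd_0^H\bbd_0$. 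The two arguments are cousins --- both ultimately rest on the same quadratic-form lower bound $\bc^H\bbG\bc\ge(1-(k-1)d\mub)\|\bc\|_{2,2}^2$, which the paper derives by Cauchy--Schwarz and a rearrangement and you derive as a block-Gershgorin eigenvalue bound --- but yours avoids introducing the coefficient vector's $\ell_2/\ell_1$-norm altogether and reads more like a block-RIP argument. One small point worth making explicit in your write-up: the factor-$k$ loss in the max-versus-average step and the factor $k$ in the paper's $\|\bc_\ell\|_{2,1}^2\le k\|\bc_\ell\|_{2,2}^2$ are the same loss in disguise, which is why the two derivations land on the identical $\beta$.
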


\section{Proofs of Theorems~\ref{thm:sc}, \ref{thm:mu}, and \ref{thm:BMP}} \label{sec:proofth2}

%\subsection{Proof of Theorem~\ref{thm:sc}}

Before proceeding with the actual proofs, we start with
some definitions and basic results that will be used throughout this section.

For $\bx \, \in \,\CC^N$, we define the general mixed
$\ell_2/\ell_p$-norm ($p=1,2,\infty$ here and in the following):
\begin{equation}
\label{eq:2p} \|\bx\|_{2,p} = \|\bv\|_p, \quad \mbox{where }
v_{\ell}=\|\bx[\ell]\|_2
\end{equation}
and the $\bx[\ell]$ are consecutive length-$d$ blocks.
%Thus, the
%mixed $2/p$ norm is equal to the $p$-norm of the vector whose
%elements are equal to the $2$-norms of the individual blocks
%comprising $\bx$.
For an $L \times N$ matrix $\bba$ with $L=Rd$
and $N=Md$, where $R$ and $M$ are integers, we define the mixed
matrix norm (with block size $d$) as
\begin{equation}
\label{eq:2infm} \|\bba\|_{2,p} = \max_{\bx\,\neq\,{\bf 0}}
\frac{\|\bba\bx\|_{2,p}}{\|\bx\|_{2,p}}.
\end{equation}
%It can be shown that the mixed matrix norms satisfy the defining properties of a matrix norm \cite{EB08}.

The following lemma provides bounds on $\|\bba\|_{2,p}$,
which will be used in the sequel.
\begin{lemma}
\label{lemma:norms} Let $\bba$ be an $L \times N$ matrix with
$L=Rd$ and $N=Md$. Denote by $\bba[\ell,r]$ the $(\ell, r)$th $d\,\times\, d$ block of
$\bba$.
%Further define $\bba[:,j]$ as the size $n
%\times d$ matrix consisting of the columns corresponding to the
%$j$th block and similarly denote $\bba[i,:]$ as the size $d \times
%N$ matrix consisting of the rows corresponding to the $\ell$th block.
Then,
\vspace*{-2mm}
\begin{eqnarray}
\label{eq:lemi} \|\bba\|_{2,\infty} & \leq  & \max_\ell
\sum_{r} \rho(\bba[\ell,r])\, \deft \, \rho_r(\bba)  \\
\label{eq:lem1} \|\bba\|_{2,1} & \leq & \max_{r} \sum_\ell
\rho(\bba[\ell,r])\, \deft \, \rho_c(\bba).
\end{eqnarray}
\vspace*{-2mm}
In particular, $\rho_r(\bba)=\rho_c(\bba^H)$.
\end{lemma}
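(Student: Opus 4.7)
The plan is to prove each of the three bounds directly from the definitions, using only the submultiplicativity of the spectral norm on each block action $\bba[\ell,r]\bx[r]$ and the triangle inequality.

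For the $\ell_2/\ell_\infty$ bound, I would first note that since $\bba$ and $\bx$ are partitioned into length-$d$ blocks, the $\ell$th block of $\bba\bx$ is $(\bba\bx)[\ell]=\sum_r \bba[\ell,r]\bx[r]$. Applying the triangle inequality on the Euclidean norm and then the operator-norm bound $\|\bba[\ell,r]\bx[r]\|_2\leq \rho(\bba[\ell,r])\,\|\bx[r]\|_2$ gives $\|(\bba\bx)[\ell]\|_2\leq \sum_r \rho(\bba[\ell,r])\,\|\bx[r]\|_2$. Bounding each $\|\bx[r]\|_2$ by $\|\bx\|_{2,\infty}$ and then maximizing over $\ell$ yields $\|\bba\bx\|_{2,\infty}\leq\rho_r(\bba)\,\|\bx\|_{2,\infty}$, hence \eqref{eq:lemi}.

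For \eqref{eq:lem1}, I would start with $\|\bba\bx\|_{2,1}=\sum_\ell\|(\bba\bx)[\ell]\|_2$ and apply the same two inequalities to obtain $\sum_\ell\sum_r\rho(\bba[\ell,r])\,\|\bx[r]\|_2$. Interchanging the order of summation, factoring out $\|\bx[r]\|_2$, and bounding the inner sum $\sum_\ell\rho(\bba[\ell,r])$ by $\rho_c(\bba)$ gives $\|\bba\bx\|_{2,1}\leq \rho_c(\bba)\sum_r\|\bx[r]\|_2=\rho_c(\bba)\,\|\bx\|_{2,1}$, yielding \eqref{eq:lem1}.

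Finally, to establish $\rho_r(\bba)=\rho_c(\bba^H)$, I would observe that the $(\ell,r)$th $d\times d$ block of $\bba^H$ equals $(\bba[r,\ell])^H$, so $\rho(\bba^H[\ell,r])=\rho(\bba[r,\ell])$ since the spectral norm is invariant under Hermitian transpose. Substituting into the definition of $\rho_c(\bba^H)=\max_r\sum_\ell\rho(\bba^H[\ell,r])$ and relabeling indices produces exactly $\max_\ell\sum_r\rho(\bba[\ell,r])=\rho_r(\bba)$.

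No step here is especially delicate; the whole argument is a routine chain of triangle inequalities and operator-norm submultiplicativity on blocks. The only point one has to handle carefully is the index bookkeeping when identifying blocks of $\bba^H$ with Hermitian transposes of blocks of $\bba$, so as not to conflate row/column indices when swapping the roles of $\ell$ and $r$ in the final identity.
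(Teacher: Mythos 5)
Your proposal is correct and follows essentially the same route as the paper's proof in Appendix A: the block decomposition $(\bba\bx)[\ell]=\sum_r\bba[\ell,r]\bx[r]$, the triangle inequality plus the spectral-norm bound on each block, the appropriate bounding of $\|\bx[r]\|_2$ for each mixed norm, and the identification $\rho(\bba^H[\ell,r])=\rho(\bba[r,\ell])$ for the final identity. No gaps.
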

\begin{IEEEproof}
See Appendix~\ref{app:mnorm}.
\end{IEEEproof}

\begin{lemma}
\label{lemma:mnorm} $\rho_{c}(\bba)$  as defined in \eqref{eq:rhocdefi} is a matrix norm and as such satisfies the following properties:
\begin{itemize}
\item
Nonnegative: $\rho_{c}(\bba)\,\ge\,0$
\item
Positive: $\rho_{c}(\bba)=0$ if and only if $\bba={\bf 0}$
\item
Homogeneous: $\rho_{c}(\alpha \bba)=|\alpha|\rho_{c}(\bba)$ for all $\alpha\,\in\,\CC$
\item
Triangle inequality: $\rho_{c}(\bba+\bbb)\,\le\,\rho_{c}(\bba)+\rho_{c}(\bbb)$
\item
Submultiplicative: $\rho_{c}(\bba \bbb)\,\le\,\rho_{c}(\bba)\rho_{c}(\bbb)$.
\end{itemize}
\end{lemma}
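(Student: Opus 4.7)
The plan is to verify the five listed axioms of a matrix norm one after the other, leveraging the fact that the spectral norm $\rho(\cdot)$ is itself a norm on $d\,\times\,d$ matrices (nonnegative, definite, absolutely homogeneous, subadditive, and submultiplicative). Nonnegativity is immediate since $\rho(\bba[\ell,r])\,\ge\,0$ and the operations $\sum_\ell$ and $\max_r$ preserve this. Definiteness follows because $\rho_c(\bba)=0$ forces every $\rho(\bba[\ell,r])=0$, and definiteness of $\rho$ then yields $\bba[\ell,r]=\bo$ for all $\ell,r$, i.e., $\bba=\bo$. Homogeneity is just $\rho(\alpha\bba[\ell,r])=|\alpha|\rho(\bba[\ell,r])$ pulled through the sum and the max.

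For the triangle inequality, I would start from $(\bba+\bbb)[\ell,r]=\bba[\ell,r]+\bbb[\ell,r]$ and apply subadditivity of $\rho$ block-wise, then sum over $\ell$ and use the elementary fact that $\max_r(f_r+g_r)\,\le\,\max_r f_r+\max_r g_r$. This splits the expression into $\rho_c(\bba)+\rho_c(\bbb)$ cleanly.

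The main substantive step is submultiplicativity, so I would spend the bulk of the argument there. Writing $\bbc=\bba\bbb$, the block-indexed product expands as $\bbc[\ell,r]=\sum_k \bba[\ell,k]\bbb[k,r]$. Applying subadditivity and then submultiplicativity of the spectral norm on each $d\,\times\,d$ summand gives
\begin{equation*}
\rho(\bbc[\ell,r])\,\le\,\sum_k \rho(\bba[\ell,k])\,\rho(\bbb[k,r]).
\end{equation*}
Summing over $\ell$ and interchanging the order of summation yields
\begin{equation*}
\sum_\ell \rho(\bbc[\ell,r])\,\le\,\sum_k \!\left(\sum_\ell \rho(\bba[\ell,k])\right)\!\rho(\bbb[k,r]).
\end{equation*}
The inner parenthesized sum is bounded uniformly in $k$ by $\rho_c(\bba)$, which factors out, leaving $\rho_c(\bba)\sum_k \rho(\bbb[k,r])\,\le\,\rho_c(\bba)\rho_c(\bbb)$ after taking $\max_r$ on both sides.

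I do not expect any real obstacle; each step is a direct transport of a spectral-norm property through the outer $\sum_\ell$ and $\max_r$, with the only mild care being the $\max$/sum manipulations in the triangle inequality and the reordering of summations in the submultiplicativity bound. The cleanest presentation is therefore to list the five properties and dispatch them in this order, since each relies only on the previous ones and on standard properties of $\rho(\cdot)$.
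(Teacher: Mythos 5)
Your proposal is correct and follows essentially the same route as the paper's proof: each axiom is verified by transporting the corresponding property of the spectral norm through the outer $\sum_\ell$ and $\max_r$, and the submultiplicativity argument (block-expand the product, apply subadditivity and submultiplicativity of $\rho$, interchange the sums, bound $\sum_\ell\rho(\bba[\ell,k])$ by $\rho_c(\bba)$ uniformly in $k$) is identical in substance to the paper's, which merely phrases it one column block $\bbb[\ell]$ at a time. No gaps.
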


\begin{IEEEproof}
See Appendix~\ref{app:matrixnorm}.
\end{IEEEproof}

\subsection{Proof of Theorem~\ref{thm:sc} for BOMP}

We begin by proving that (\ref{eq:sc}) is sufficient to ensure
recovery using the BOMP algorithm. We first show that if $\brv_{\ell-1}$ is in $\R(\bbd_0$),
then the next chosen index
$i_{\ell}$ will correspond to a block
in $\bbd_0$. Assuming that this is true, it follows immediately that
$i_{1}$ is correct since clearly $\brv_0=\by$ lies in
$\R(\bbd_0)$.  Noting that $\brv_{\ell}$ lies in the space spanned
by $\by$ and $\bbd[i],i \in \I_{\ell}$, where $\I_{\ell}$
denotes the indices chosen up to stage $\ell$, it follows that if
$\I_{\ell}$ corresponds to correct indices, i.e., $\bbd[i]$ is a
block of $\bbd_0$ for all $i \in \I_{\ell}$, then $\brv_{\ell}$
also lies in $\R(\bbd_0$) and the next index will be correct as well.
Thus, at every step a correct $L\,\times\,d$ block of $\bbd$ is selected. As we will
show below no index will be chosen twice since the new residual is
orthogonal to all the previously chosen subspaces; consequently
the correct $\bx_0$ will be recovered in $k$ steps.

We first show that if $\brv_{\ell-1} \in \R(\bbd_0)$,
then under (\ref{eq:sc}) the next chosen index corresponds to a block
in $\bbd_0$. This is equivalent to requiring that
%$\max_{i}\|\bbd_0^H[i] \brv_{\ell-1}\|_2 > \max_{j}\|\obbd_0^H[j]
%\brv_{\ell-1}\|_2$, or equivalently, that
\begin{equation}
\label{eq:condp} z(\brv_{\ell-1})=\frac{\|\obbd_0^H
\brv_{\ell-1}\|_{2,\infty}}{\|\bbd_0^H \brv_{\ell-1}\|_{2,\infty}}<1.
\end{equation}
%The matrix $(\bbd_0^\dagger)^H \bbd_0^H$ is a projector onto the column span of $\bbd_0$. Since  $\brv_{\ell-1} \in \R(\bbd_0)$ it follows that
From the properties of the pseudo-inverse, it follows that $\bbd_{0}\bbd^{\dagger}_{0}$ is the orthogonal projector onto $\R(\bbd_0)$. Hence, it holds that
$\bbd_0 \bbd_0^\dagger \brv_{\ell-1}=\brv_{\ell-1}$. Since $\bbd_0\bbd_0^\dagger$ is
Hermitian,
we have
\begin{equation}
\label{eq:dagger} (\bbd_0^\dagger)^H \bbd_0^H
\brv_{\ell-1}=\brv_{\ell-1}.
\end{equation}
Substituting (\ref{eq:dagger}) into (\ref{eq:condp}) yields
\begin{align}
z(\brv_{\ell-1})&=\frac{\|\obbd_0^H (\bbd_0^\dagger)^H \bbd_0^H
\brv_{\ell-1}\|_{2,\infty}}{\|\bbd_0^H
\brv_{\ell-1}\|_{2,\infty}} \nonumber \\
&\leq \rho_r(\obbd_0^H
(\bbd_0^\dagger)^H) \nonumber \\
&=\rho_c( \bbd_0^\dagger\obbd_0)
\end{align}
where we used Lemma~\ref{lemma:norms}. 

It remains to show that BOMP in each step 
chooses a new block participating in the (unique) representation $\by=\bbd \bx$. 
We start by defining $\bbd_\ell=[\bbd[i_1]\,\cdots\,\bbd[i_\ell]]$ where $i_j\,\in\,{\cal I},\, 1\,\le\,j\,\le\,\ell$. 
It follows that the solution of the minimization problem in (\ref{eq:lsq}) is given by
$$
\hat{\bx}=(\bbd^H_\ell \bbd_\ell)^{-1}\bbd^H_\ell \by
$$
which upon inserting into (\ref{eq:rlsq}) yields
$$
\brv_\ell=(\bbI-\bbd_\ell (\bbd^H_\ell \bbd_\ell)^{-1}\bbd^H_\ell)\by.
$$
Now, we note that $\bbd_\ell (\bbd^H_\ell \bbd_\ell)^{-1}\bbd^H_\ell$ is the orthogonal projector
onto the range space of $\bbd_\ell$. Therefore $\|\bbd^H[i]\brv_\ell\|_2=0$ for all blocks
$\bbd[i]$ that lie in the span of the matrix $\bbd_\ell$. By the assumption
in Proposition (\ref{prop:inv}) we are guaranteed that as long as $l\,<\,k$ there exists at least one
block (in $\bbd_0$) which does not lie in the span of $\bbd_\ell$. Since this block (or these blocks) will lead to strictly
positive $\|\bbd^H[i]\brv_{\ell}\|_2$ the result is established. This concludes the proof.

\subsection{Proof of Theorem~\ref{thm:sc} for L-OPT}

We next show that (\ref{eq:sc}) is also sufficient to ensure recovery
using L-OPT.
To this end we rely on the
following lemma:
\begin{lemma}
\label{lemma:1inq} Suppose that $\bv\,\in\,\CC^{kd}$ 
with $\|\bv[\ell]\|_2>0$, for all $\ell$, and that $\bba$ is a matrix
of size $L \times (kd)$, with $L=Rd$ and the $d\,\times\,d$ blocks $\bba[\ell,r]$.
Then, $\|\bba\bv\|_{2,1} \leq
\rho_c(\bba) \|\bv\|_{2,1}$. If in addition the values of
$\rho_c(\bba\bbj_{\ell})$ are not all equal, then the inequality
is strict. Here, $\bbj_{\ell}$ is a $(kd) \times d$ matrix that is
all zero except for the $\ell$th $d\,\times\,d$ block which equals
$\bbi_d$.
\end{lemma}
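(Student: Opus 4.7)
The plan is to unpack the operator-norm bound implicit in Lemma~\ref{lemma:norms} as a two-step chain, so that the place where strict inequality can be gained becomes visible. First I would write $\bba\bv$ block-by-block: for each block-row index $\ell$, $(\bba\bv)[\ell]=\sum_{r=1}^{k}\bba[\ell,r]\bv[r]$. Applying the triangle inequality to $\|\cdot\|_2$ and then the operator interpretation of the spectral norm, $\|\bba[\ell,r]\bv[r]\|_2\le \rho(\bba[\ell,r])\|\bv[r]\|_2$, and summing over $\ell=1,\dots,R$, I obtain
\begin{equation}
\label{eq:planstep1}
\|\bba\bv\|_{2,1}\;\le\;\sum_{r=1}^{k}\|\bv[r]\|_2\sum_{\ell=1}^{R}\rho(\bba[\ell,r])\;=\;\sum_{r=1}^{k}\|\bv[r]\|_2\,\rho_{c}(\bba\bbj_{r}),
\end{equation}
where the last equality uses the observation that $\bba\bbj_{r}$ is the $r$th $L\times d$ column-block of $\bba$, so the outer max in \eqref{eq:rhocdefi} is vacuous and reduces to the single sum $\sum_{\ell}\rho(\bba[\ell,r])$.

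Next I would finish by bounding the weighted sum on the right of \eqref{eq:planstep1} by its maximum coefficient, namely
\begin{equation}
\label{eq:planstep2}
\sum_{r=1}^{k}\|\bv[r]\|_2\,\rho_{c}(\bba\bbj_{r})\;\le\;\Bigl(\max_{r}\rho_{c}(\bba\bbj_{r})\Bigr)\sum_{r=1}^{k}\|\bv[r]\|_2\;=\;\rho_{c}(\bba)\,\|\bv\|_{2,1},
\end{equation}
using the identity $\rho_{c}(\bba)=\max_{r}\rho_{c}(\bba\bbj_{r})$ which is immediate from the definition \eqref{eq:rhocdefi}. Chaining \eqref{eq:planstep1} and \eqref{eq:planstep2} gives the first claim.

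For the strict-inequality statement, the key elementary fact is that for nonnegative weights $w_{r}$ and scalars $c_{r}$ one has $\sum_{r}c_{r}w_{r}\le(\max_{r}c_{r})\sum_{r}w_{r}$, with equality iff $c_{r}=\max_{s}c_{s}$ for every $r$ with $w_{r}>0$. Applying this with $w_{r}=\|\bv[r]\|_2$ and $c_{r}=\rho_{c}(\bba\bbj_{r})$: by hypothesis $w_{r}>0$ for all $r$, so if the numbers $\rho_{c}(\bba\bbj_{\ell})$ are not all equal then some $c_{r}$ is strictly less than the maximum while its weight is strictly positive, forcing \eqref{eq:planstep2} to be strict. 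Consequently the full chain is strict, which is the desired conclusion.

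No step should present real difficulty; the only thing to be careful about is keeping the two-level indexing straight and confirming that $\rho_{c}(\bba\bbj_{r})$ indeed collapses to the single column-block sum, so that $\rho_{c}(\bba)=\max_{r}\rho_{c}(\bba\bbj_{r})$ matches the definition \eqref{eq:rhocdefi}. Once that bookkeeping is in place, the argument is a routine combination of the triangle inequality, the spectral-norm/vector-norm bound, and the weighted-average inequality.
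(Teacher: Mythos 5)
Your proof is correct and follows essentially the same route as the paper's: the paper simply cites the chain of inequalities \eqref{eq:axrho} from the proof of Lemma~\ref{lemma:norms} (which is exactly your two-step bound) and then observes, as you do, that the final weighted-average step is strict when the column sums $\rho_c(\bba\bbj_\ell)$ are not all equal and all $\|\bv[\ell]\|_2>0$.
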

\begin{IEEEproof}
See Appendix~\ref{app:1inq}.
\end{IEEEproof}

To prove that L-OPT recovers the correct vector $\bx_0$, let
$\bx'\,\neq\,\bx_0$ be another length-$N$ block $k$-sparse vector
for which $\by=\bbd\bx'$.
Denote by $\bc_0$ and $\bc'$ the length-$kd$ vectors consisting of
the nonzero elements of $\bx_0$ and $\bx'$, respectively. Let
$\bbd_0$ and $\bbd'$ denote the corresponding columns of $\bbd$
so that $\by=\bbd_0\bc_0=\bbd'\bc'$. From the assumption in
Proposition~\ref{prop:inv}, it follows that there cannot be two
different representations using the same blocks $\bbd_0$.
Therefore, $\bbd'$ must contain at least one block, $\bbz$, that
is not included in $\bbd_0$. From (\ref{eq:scs}), we get
$\rho_c(\bbd_0^\dagger \bbz)<1$. For any other block $\bbu$ in
$\bbd$, we must have that
\begin{equation}
\label{eq:nl1} \rho_c(\bbd_0^\dagger \bbu)\leq 1.
\end{equation}
Indeed, if $\bbu \in \bbd_0$, then
$\bbu=\bbd_0[\ell]=\bbd_0\bbj_{\ell}$ where $\bbj_{\ell}$ was
defined in Lemma~\ref{lemma:1inq}. In this case, $\bbd_0^\dagger
\bbd_0[\ell]=\bbj_{\ell}$ and therefore
$\rho_c(\bbd_{0}^{\dagger}\bbu)=\rho_c(\bbj_\ell)=1$. If, on the
other hand, $\bbu=\obbd[\ell]$ for some $\ell$, then it follows
from (\ref{eq:scs}) that $\rho_c(\bbd_0^\dagger \bbu)<1$.

Now, suppose first that the $(kd)\,\times\,d$ blocks in
$\bbd_0^\dagger \bbd'$ do not all have the same\footnote{Note that for an $(sd)\,\times\,d$ matrix $\bba$,
$\rho_{c}(\bba)=\sum_{\ell}\rho(\bba[\ell])$, where
$\bba[\ell],\,\ell=1,2,...,s$, denotes the $d\,\times\,d$ block of
$\bba$ made up of the rows $\{(\ell-1)d+1,...,\ell d\}.$}
$\rho_{c}$. Then,
\begin{align}
\label{eq:l1u} \|\bc_0\|_{2,1}  & =   \|\bbd_0^\dagger \bbd_0 \bc_0\|_{2,1}\\
&= \|\bbd_0^\dagger
\bbd'\bc'\|_{2,1} \nonumber \\
& <  \rho_c(\bbd_0^\dagger \bbd')\|\bc'\|_{2,1} \label{eq:l1u_ineq}\\
& \leq
\|\bc'\|_{2,1} \label{eq:l1u_second}
\end{align}
where the first equality is a consequence of the columns of
$\bbd_0$ being linearly independent (a consequence of
the assumption in Proposition~\ref{prop:inv}), the first inequality follows from
Lemma~\ref{lemma:1inq} since $\|\bc'[\ell]\|_2>0$,  for all $\ell$, and the last
inequality follows from (\ref{eq:nl1}). If all the $(kd)\,\times\,d$ blocks in
$\bbd_0^\dagger \bbd'$ have identical $\rho_{c}$, then the inequality (\ref{eq:l1u_ineq}) is no
longer strict, but the second
inequality (\ref{eq:l1u_second}) becomes strict instead as a consequence of $\rho_c(\bbd_0^\dagger \bbz)\,<\,1$; therefore $\|\bc_{0}\|_{2,1}\,<\,\|\bc^{\prime}\|_{2,1}$ still holds.

Since $\|\bx_0\|_{2,1}=\|\bc_0\|_{2,1}$ and
$\|\bx'\|_{2,1}=\|\bc'\|_{2,1}$, we conclude that under
(\ref{eq:scs}), any set of coefficients used to represent the
original signal that is not equal to $\bx_0$ will result in a
larger $\ell_2/\ell_1$-norm.

\subsection{Proof of Theorem~\ref{thm:mu}}

We start by deriving an upper bound on
$\rho_c(\bbd_0^\dagger \obbd)$ in terms of $\mub$ and $\nu$. Writing $\bbd_0^\dagger$ out,
we have that
\begin{equation}
\label{eq:bound1} \rho_c(\bbd_0^\dagger
\obbd)=\rho_c((\bbd_0^H\bbd_0)^{-1}\bbd_0^H \obbd).
\end{equation}
Submultiplicativity of $\rho_{c}(\bba)$ (Lemma \ref{lemma:mnorm}) implies that
%Next, we rely on the following lemma:
%\begin{lemma}
%\label{lemma:mnorms} Let $\bba$ and $\bbb$ denote two matrices of
%appropriate dimensions. Then, $\rho_c(\bba\bbb) \leq
%\rho_c(\bba)\rho_c(\bbb)$.
%\end{lemma}
%\begin{IEEEproof}
%See Appendix~\ref{app:mnorms}.
%\end{IEEEproof}
%Combining Lemma~\ref{lemma:mnorms}  and (\ref{eq:bound1}), we get
\begin{align}
\label{eq:bound2} \rho_c(\bbd_0^\dagger \obbd) & \leq
\rho_c((\bbd_0^H\bbd_0)^{-1})\rho_c(\bbd_0^H \obbd) \nonumber \\
& =  \rho_c((\bbd_0^H\bbd_0)^{-1}) \max_{j \notin \Lambda_0}
\sum_{i \in \Lambda_0} \rho(\bbd^H[i] \bbd[j])
\end{align}
where $\Lambda_0$ is the set of indices $\ell$ for which
$\bbd[\ell]$ is in $\bbd_0$. Since $\Lambda_0$ contains $k$
indices, the last term in (\ref{eq:bound2}) is bounded above by
$kd\mub$, which allows us to conclude that
\begin{equation}
\label{eq:bound3} \rho_c(\bbd_0^\dagger \obbd) \leq
\rho_c((\bbd_0^H\bbd_0)^{-1}) kd\mub.
\end{equation}

It remains to develop a bound on $\rho_c((\bbd_0^H\bbd_0)^{-1})$.
To this end, we express $\bbd_0^H\bbd_0$ as
$\bbd_0^H\bbd_0=\bbi+\bba$, where $\bba$ is a $(kd) \times (kd)$
matrix with blocks $\bba[\ell,r]$ of size $d \times d$ such that $\bba_{i,i}=0$,
%$\bba[\ell,\ell]=0$ for all $\ell$.
 for all $i$. This follows from the fact
that the columns of $\bba$ are normalized.
%$\bbd^H[\ell]\bbd[\ell]=\bbi_d, \forall l$.
Since
$\bba[\ell,r]=\bbd_0^H[\ell]\bbd_0[r]$, for all $\ell\neq r$, and $\bba[r,r]=\bbd_0^H[r]\bbd_0[r]-\bbi_d$,
%for $\ell \neq r$
we have
%\begin{equation}
%\label{eq:kdbu}
%\rho_c(\bba)=\max_{r} \sum_{\ell \neq r} \rho(\bba[\ell,r]) \leq
%(k-1)d\mub,
%\end{equation}
\begin{align}
\rho_c(\bba)&=\max_{r} \sum_{\ell} \rho(\bba[\ell,r]) \nonumber\\
&\leq \max_r  \rho(\bba[r,r]) + \max_r\sum_{\ell \neq r} \rho(\bba[\ell,r])\label{eq:maxrho}\\
%&\leq \max_r \sqrt{\lambda_{\max}((\bbD_0^H[r]\bbD_0[r]-\bbi_d)^H(\bbD_0^H[r]\bbD_0[r]-\bbi_d))}+(k-1)d\mub\label{eq:boundonmub}\\
%&= \max_r\lambda_{\max}(\bbD_0^H[r]\bbD_0[r])+(k-1)d\mub\nonumber\\
&\leq (d-1)\nu +(k-1)d\mub\label{eq:Gershgorin2}
\end{align}
%where we used the fact that the summation is over $k-1$ elements.
where the first term in \eqref{eq:Gershgorin2} is obtained by applying Ger\v{s}gorin's disc theorem (\cite[Corollary 6.1.5]{Horn85}) together with the definition of $\nu$; 
the second
term in \eqref{eq:Gershgorin2} follows from the fact that the summation in the second term of \eqref{eq:maxrho} is over $k-1$ elements and $\rho(\bba[\ell,r])$, for all $\ell\neq r$, can be upper-bounded by $d\mub$.
Assumption (\ref{eq:muc1}) now implies that
%$(k-1)d\mub<1$
$(d-1)\nu+(k-1)d\mub<1$ and therefore, from
%(\ref{eq:kdbu})
\eqref{eq:Gershgorin2}, we have $\rho_c(\bba)<1$.

We next use the following result.

\begin{lemma}
\label{lemma:neumann}
Suppose that $\rho_c(\bba)<1$. Then
$(\bbi+\bba)^{-1}=\sum_{k=0}^\infty (-\bba)^k$.
\end{lemma}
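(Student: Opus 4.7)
The plan is to recognize this as a standard Neumann-series argument, where all the work has been set up for us by Lemma \ref{lemma:mnorm}, which certifies that $\rho_c$ is a genuine submultiplicative matrix norm. The only thing to verify is that the partial sums of the series converge and that the limit inverts $\bbi+\bba$.

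First I would define the partial sums $\bbS_n = \sum_{k=0}^n (-\bba)^k$ and show these form a Cauchy sequence. Using submultiplicativity (Lemma \ref{lemma:mnorm}) we get $\rho_c((-\bba)^k) = \rho_c(\bba^k) \leq \rho_c(\bba)^k$, so for $m > n$,
\begin{equation}
\rho_c(\bbS_m - \bbS_n) \leq \sum_{k=n+1}^{m} \rho_c(\bba)^k,
\end{equation}
which tends to $0$ since $\rho_c(\bba)<1$ makes this a tail of a convergent geometric series. Because on the finite-dimensional space of $(kd)\times(kd)$ matrices all norms are equivalent, Cauchy in $\rho_c$ is the same as Cauchy entrywise, so $\bbS_n$ converges to some matrix $\bbS := \sum_{k=0}^\infty (-\bba)^k$.

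Next I would compute the telescoping identity
\begin{equation}
(\bbi+\bba)\bbS_n = \sum_{k=0}^{n}(-\bba)^k - \sum_{k=1}^{n+1}(-\bba)^k = \bbi - (-\bba)^{n+1}.
\end{equation}
Again by submultiplicativity, $\rho_c((-\bba)^{n+1})\leq \rho_c(\bba)^{n+1}\to 0$, so $(-\bba)^{n+1}\to\bo$. Passing to the limit and invoking continuity of matrix multiplication yields $(\bbi+\bba)\bbS = \bbi$. The symmetric computation $\bbS_n(\bbi+\bba) = \bbi-(-\bba)^{n+1}$ gives $\bbS(\bbi+\bba)=\bbi$, so $\bbS=(\bbi+\bba)^{-1}$, as claimed.

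There is no real obstacle here; the only slightly delicate point is to make sure convergence in the non-standard norm $\rho_c$ implies convergence in the usual sense, which is immediate from finite-dimensionality. Everything else is the textbook Neumann-series manipulation, adapted to use $\rho_c$ in place of the spectral norm.
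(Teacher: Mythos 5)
Your proof is correct and follows essentially the same route as the paper: both hinge on Lemma~\ref{lemma:mnorm} establishing that $\rho_c$ is a submultiplicative matrix norm, after which the conclusion is the standard Neumann-series result. The paper simply cites \cite[Corollary 5.6.16]{Horn85} for that result, whereas you write out the textbook argument (Cauchy partial sums via norm equivalence in finite dimensions, plus the telescoping identity) in full.
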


\begin{IEEEproof}
Follows immediately by using the fact that $\rho_{c}(\bba)$ is a matrix norm (cf.~Lemma~\ref{lemma:mnorm}) and applying \cite[Corollary 5.6.16]{Horn85}.
\end{IEEEproof}

Thanks to Lemma~\ref{lemma:neumann}, we have that
%\begin{eqnarray}
%\label{eq:bound4} \lefteqn{\rho_c((\bbd_0^H\bbd_0)^{-1})  =
%\rho_c\left(\sum_{k=0}^\infty
%(-\bba)^k\right)} \nonumber \\
%&\leq &\sum_{k=0}^\infty \rho_c(\bba)^k=\frac{1}{1-\rho_c(\bba)}
%\leq \frac{1}{1-(k-1)d\mub}.
%\end{eqnarray}
\begin{align}
        \rho_c((\bbd_0^H\bbd_0)^{-1}) & = \rho_c\left(\sum_{k=0}^\infty
(-\bba)^k\right) \nonumber \\
& \leq \sum_{k=0}^\infty \left(\rho_c(\bba)\right)^k\label{eq:subandtriangle}\\
& =\frac{1}{1-\rho_c(\bba)}\nonumber\\
&\leq \frac{1}{1-(d-1)\nu-(k-1)d\mub}.\label{eq:boundonrhoc}
\end{align}
Here, \eqref{eq:subandtriangle} is a consequence of $\rho_{c}(\bba)$ satisfying the triangle inequality and being submultiplicative and \eqref{eq:boundonrhoc} follows by using \eqref{eq:Gershgorin2}.
%Here, the first inequality is a consequence of $\rho_{c}(\bba)$ being submultiplicative and the second inequality
%follows by application of (\ref{eq:kdbu}).

%Combining (\ref{eq:bound4}) with (\ref{eq:bound3}), we get
%\begin{equation}
%\rho_c(\bbd_0^\dagger \obbd) \leq \frac{ kd\mub}{1-(k-1)d\mub} <1,
%\end{equation}
%where the last inequality is a consequence of (\ref{eq:muc1}).

Combining \eqref{eq:boundonrhoc} with \eqref{eq:bound3}, we get
\begin{equation}
\rho_c(\bbd_0^\dagger \obbd) \leq \frac{ kd\mub}{1-(d-1)\nu-(k-1)d\mub} <1
\end{equation}
where the last inequality is a consequence of (\ref{eq:muc1}).

\subsection{Proof of Theorem~\ref{thm:BMP}}

The proof of the first part of Theorem \ref{thm:BMP} follows from the arguments in the proofs of Theorems \ref{thm:sc} and \ref{thm:mu} for $\nu=0$. As a
consequence of the first statement of Theorem \ref{thm:BMP}, we get that the residual $\brv_{\ell}$ in each step of the algorithm will be in $\R(\bbd_0)$. For the proof of the second statement in Theorem \ref{thm:BMP}, we mimic
the corresponding proof in \cite{gribonval2006}. We first need the following lemma, which is an extension of \cite[Lemma 3.5]{DeVore1996} to the block-sparse case. This lemma will provide us
with a lower bound on the amount of energy that can be removed from the residual $\brv_{\ell}$ in one step of the BMP algorithm.
\begin{lemma}
\label{lem:deVore} Let $\bbd_0$ denote the $L \times (kd)$ matrix whose blocks
correspond to the nonzero blocks of $\bx_0$. Then, we have
\begin{align}\label{eq:lemmadeVore}
        \max_{i}\|\bbd_0^H[i]\brv_{\ell}\|_2\geq\frac{\|\brv_{\ell}\|_{2}^2}{\|\bc_\ell\|_{2,1}}
\end{align}
where  $\bc_\ell$ is the coefficient vector corresponding to $\brv_{\ell}\neq \mathrm{\bf 0}$, i.e., $\brv_{\ell}=\bbd_0\bc_\ell$.
\end{lemma}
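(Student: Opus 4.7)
The plan is to mimic the classical DeVore–Temlyakov argument at the block level by pairing the identity $\brv_\ell=\bbd_0\bc_\ell$ (which holds because, by the first part of Theorem~\ref{thm:BMP}, $\brv_\ell\in\R(\bbd_0)$) with a block-wise Cauchy–Schwarz bound.

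First, I would write the squared residual norm as an inner product and split off one copy of $\brv_\ell$ using $\brv_\ell=\bbd_0\bc_\ell=\sum_i\bbd_0[i]\bc_\ell[i]$, giving
\begin{equation}
\|\brv_\ell\|_2^2=\brv_\ell^H\bbd_0\bc_\ell=\sum_i \bl\bbd_0^H[i]\brv_\ell\br^H\bc_\ell[i].
\end{equation}
Next, taking absolute values and applying the Cauchy–Schwarz inequality on each length-$d$ block gives
\begin{equation}
\|\brv_\ell\|_2^2\leq\sum_i \|\bbd_0^H[i]\brv_\ell\|_2\,\|\bc_\ell[i]\|_2.
\end{equation}

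Then I would pull the largest block-correlation out of the sum, recognize the remaining sum as $\|\bc_\ell\|_{2,1}$ by definition~(\ref{eq:2p}), and rearrange to obtain~(\ref{eq:lemmadeVore}). The only points needing a brief justification are (i) the existence of the coefficient vector $\bc_\ell$, which is guaranteed by $\brv_\ell\in\R(\bbd_0)$ as established in the first part of the theorem, and (ii) the fact that the blocks in $\bbd_0$ are linearly independent (from Proposition~\ref{prop:inv}), so that $\bc_\ell$ is well-defined; neither is a real obstacle. In fact, the whole argument is essentially a one-line application of block-wise Cauchy–Schwarz, so there is no hard step — the main care is simply in keeping the block structure straight and matching the definition of the mixed norm $\|\cdot\|_{2,1}$.
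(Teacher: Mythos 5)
Your proposal is correct and follows essentially the same route as the paper: expand $\|\brv_\ell\|_2^2=\sum_i\bc_\ell^H[i]\bbd_0^H[i]\brv_\ell$, apply the Cauchy--Schwarz inequality blockwise, pull out $\max_i\|\bbd_0^H[i]\brv_\ell\|_2$, and identify $\sum_i\|\bc_\ell[i]\|_2=\|\bc_\ell\|_{2,1}$. The only (harmless) addition is your explicit remark on the existence of $\bc_\ell$, which the paper leaves implicit.
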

\begin{proof}
We start by noting that
$\brv_{\ell}=\bbd_{0}\bc_{\ell}=\sum_{i=1}^k\bbd_0[i]\bc_\ell[i]$,
where $\bc_{\ell}[i]\,\neq\,\mathrm{\bf 0}$ for at least one index
$i\,\in\,\{1,2,...,k\}$. It follows that
        \begin{align}
                \|\brv_{\ell}\|_{2}^2
                        & = \sum_{i=1}^k\bc_\ell^H[i]\bbd_0^H[i]\brv_{\ell} \nonumber\\
                        &\leq \sum_{i=1}^k|\bc_\ell^H[i]\bbd_0^H[i]\brv_{\ell}| \nonumber\\
                        &\leq \sum_{i=1}^k\|\bc_\ell[i]\|_2\|\bbd_0^H[i]\brv_{\ell}\|_2 \nonumber\\&  \leq \left(\max_{i}\|\bbd_0^H[i]\brv_{\ell}\|_2\right) \sum_{i=1}^k\|\bc_\ell[i]\|_2.
                         \label{eq:proofdo}
        \end{align}
The result then follows by noting that $
\sum_{i=1}^k\|\bc_\ell[i]\|_2=\|\bc_\ell\|_{2,1}$.
\end{proof}

Next, we compute an upper bound on $\|\bc_\ell\|_{2,1}$. Using
$\bbm[i,j]=\bbd_0^H[i]\bbd_0[j]$, where $i,j\in\{1,\ldots,k\}$, we
get
\begin{align}
        \|\brv_{\ell}\|_{2}^2 & = \bc_\ell^H\bbd_0^H\bbd_0\bc_\ell \nonumber \\
        &=\sum_{i=1}^k\sum_{j=1}^k\bc_\ell^H[i]\bbm[i,j]\bc_\ell[j] \nonumber \\
        & = \sum_{i=1}^k \bc_\ell^H[i]\bbm[i,i]\bc_\ell[i] + \sum_{i=1}^k\sum_{\substack{j=1\\j\neq i}}^k\bc_\ell^H[i]\bbm[i,j]\bc_\ell[j] \nonumber \\
%       &\geq \sum_{i=1}^k\lambda_{\min}(\bbm[i,i])\|\bc[i]\|_2^2-\sum_{\substack{j=1\\j\neq i}}^k|\bc^H[i]\bbm[i,j]\bc[j]      |\label{eq:proof1}\\
%       &\geq (1-(d-1)\nu)\sum_{i=1}^k\|\bc[i]\|_2^2-\sum_{\substack{j=1\\j\neq i}}^k\|\bc[i]\|_2\|\bbm[i,j]\bc[j]\|_2\label{eq:proof2}\\
%       &\geq (1-(d-1)\nu)\sum_{i=1}^k\|\bc[i]\|_2^2-\sum_{\substack{j=1\\j\neq i}}^k\|\bc[i]\|_2\rho(\bbm[i,j])\|\bc[j]\|_2\\
%       &\geq (1-(d-1)\nu)\|\bc\|_{2,2}^2-d\mub\sum_{\substack{j=1\\j\neq i}}^k\|\bc[i]\|_2\|\bc[j]\|_2\\
%       & =(1-(d-1)\nu) \|\bc\|_{2,2}^2-d\mub\sum_{s=1}^{k-1}\sum_{i=1}^k\|\bc[i]\|\|\bc[(i+s)]\|\label{eq:proofmodulo}
        &\geq \sum_{i=1}^k\|\bc_\ell[i]\|_2^2-\sum_{i=1}^k\sum_{\substack{j=1\\j\neq i}}^k|\bc_\ell^H[i]\bbm[i,j]\bc_\ell[j]    |\label{eq:proof1}
\end{align}
where we used the fact that $\bbm[i,i]=\bbi_d$, for all $i$, as a consequence of each of the blocks of $\bbd_0$ consisting of orthonormal vectors.
Applying the Cauchy-Schwarz inequality to the second term in \eqref{eq:proof1}, we get
\begin{align}
        \|\brv_{\ell}\|_{2}^2&\geq \sum_{i=1}^k\|\bc_\ell[i]\|_2^2-\sum_{i=1}^k\sum_{\substack{j=1\\j\neq i}}^k\|\bc_\ell[i]\|_2\|\bbm[i,j]\bc_\ell[j]\|_2\label{eq:proof2}\\
        &\geq \|\bc_\ell\|_{2,2}^2-\sum_{i=1}^k\sum_{\substack{j=1\\j\neq i}}^k\|\bc_\ell[i]\|_2\|\bc_\ell[j]\|_2d\mub \label{eq:prooflow}\\
        & = \|\bc_\ell\|_{2,2}^2-d\mub\sum_{s=1}^{k-1}\sum_{i=1}^k\|\bc_\ell[i]\|_2\|\bc_\ell[(i+s)_k]\|_2\label{eq:proofmodulo}
\end{align}
where $(i+s)_k$ stands for $(i+s)$ modulo $k$, \eqref{eq:prooflow} follows from $\|\bbm[i,j]\bc_\ell[j]\|_2\,\le\, d\mub \|\bc_\ell[j]\|_2$, and \eqref{eq:proofmodulo} is obtained by merely rearranging terms in the summation in \eqref{eq:prooflow}.
Applying the Cauchy-Schwarz inequality to the inner product $\sum_{i=1}^k\|\bc_\ell[i]\|_2\|\bc_\ell[(i+s)_k]\|_2$, we obtain
%\begin{align}
%       \|\brv_{\ell}\|_{2}^2&\geq(1-(d-1)\nu) \|\bc\|_{2,2}^2-d\mub\sum_{s=1}^{k-1}\|\bc\|_{2,2}^2\label{eq:proof3}\\
%       & = (1-(d-1)\nu-(k-1)d\mub)\|\bc\|_{2,2}^2\\
%       &\geq\frac{(1-(d-1)\nu-(k-1)d\mub)}{k}\|\bc\|_{2,1}^2\label{eq:proof4}
%\end{align}
\begin{align}
        \|\brv_{\ell}\|_{2}^2&\geq \|\bc_\ell\|_{2,2}^2-d\mub\sum_{s=1}^{k-1}\|\bc_\ell\|_{2,2}^2\label{eq:proof3}\\
        & = (1-(k-1)d\mub)\|\bc_\ell\|_{2,2}^2 \nonumber \\
        &\geq\frac{(1-(k-1)d\mub)}{k}\|\bc_\ell\|_{2,1}^2\label{eq:proof4}
\end{align}
where \eqref{eq:proof4} follows by the same argument as used in \eqref{eq:norm3}. Thus, combining \eqref{eq:proofdo} with \eqref{eq:proof4}, we get
%\begin{align}
%       \max_{i \in \I}\|\bbd^H[i]\brv_{\ell}\|_2\geq\sqrt{\frac{(1-(k-1)d\mub)}{k}}\|\brv_{\ell}\|_{2}.
%\end{align}
%Since BMP  at each step chooses the block that is best matched to the current residual in the sense that it chooses the block $i'$ such that $i'=\argmax_{i \in \I}\|\bbd^H[i]\brv_{\ell}\|_2^2$, we can bound the energy of the residual at the $(\ell+1)$th step as follows
%\begin{align}
%       \|\brv_{\ell+1}\|_{2}^2 &\leq \|\brv_{\ell}\|_{2}^2 -\|\bbd[i']\bbd^H[i']\brv_{\ell}\|_2^2\label{eq:orthogonality}\\
%       &\leq \|\brv_{\ell}\|_{2}^2 -\lambda_{\min}(\bbd^H[i']\bbd[i'])\|\bbd^H[i']\brv_{\ell}\|_2^2\\
%       &\leq \|\brv_{\ell}\|_{2}^2 -(1-(d-1)\nu)\|\bbd^H[i']\brv_{\ell}\|_2^2\\
%       &\leq \left(1-(d-1)\nu\frac{(1-(d-1)\nu-(k-1)d\mub)}{k}\right)\|\brv_{\ell}\|_{2}^2
%\end{align}
%where in \eqref{eq:orthogonality} we have used the fact that $\brv_{\ell+1}+\bbd[i']\bbd^H[i']\brv_{\ell}=\brv_{\ell}$ together with the inverse triangle inequality. This concludes the proof.
\begin{align}
        \max_{i}\|\bbd_0^H[i]\brv_{\ell}\|_2\geq\sqrt{\frac{(1-(k-1)d\mub)}{k}}\|\brv_{\ell}\|_{2}.
\end{align}
Since, by the first statement in Theorem \ref{thm:BMP}, BMP picks a block in $\bbd_0$ in each step, we can bound the energy of the residual in the $(\ell+1)$st step as
\begin{align}
        \|\brv_{\ell+1}\|_{2}^2 &= \|\brv_{\ell}\|_{2}^2 - \|\bbd^H[i_{\ell+1}]\brv_{\ell}\|_2^2 \label{eq:orthogonality}\\
        &= \|\brv_{\ell}\|_{2}^2 - \max_{i}\|\bbd_0^H[i]\brv_{\ell}\|_2^2 \nonumber\\
        &\leq \left(1-\frac{(1-(k-1)d\mub)}{k}\right)\|\brv_{\ell}\|_{2}^2
\end{align}
where in \eqref{eq:orthogonality} we used the fact that $\brv_{\ell+1}$ is orthogonal to $\bbd[i_{\ell+1}]\bbd^H[i_{\ell+1}]\brv_{\ell}$. This concludes the proof.

\section{Discussion}
\label{sec:discuss}

Theorem~\ref{thm:mu} indicates under which conditions exploiting
block-sparsity leads to higher recovery thresholds than treating
the block-sparse signal as a (conventionally) sparse signal. For
dictionaries $\bbd$ where the individual blocks $\bbd[\ell]$
consist of orthonormal columns, for each $\ell$, we have $\nu=0$
and hence, thanks to $\mub\,\le\,\mu$, recovery through exploiting
block-sparsity is guaranteed for a potentially higher sparsity
level. If the individual blocks $\bbd[\ell]$ are, however, not
orthonormal, we have $\nu\,>\,0$, and (\ref{eq:muc1}) shows that $\nu$
has to be small for block-sparse recovery to result in higher
recovery thresholds than sparse recovery. It is now natural to
consider the case where one starts with a general dictionary
$\bbd$ and orthogonalizes the individual blocks $\bbd[\ell]$ so
that $\nu=0$. The comparison that is meaningful here is between
the recovery threshold of the original dictionary $\bbd$ without
exploiting block-sparsity and the recovery threshold of the
orthogonalized dictionary taking block-sparsity into account. To
this end, we start by noting that the assumption in Proposition
\ref{prop:inv} implies that the columns of $\bbd[\ell]$ are
linearly independent, for each $\ell$. We can therefore write
$\bbd[\ell]=\bba[\ell]\bbw_{\ell}$ where $\bba[\ell]$ consists of
orthonormal columns that span $\R(\bbd[\ell])$ and $\bbw_{\ell}$
is invertible. The orthogonalized dictionary is given by the $L
\times N$ matrix $\bba$ with blocks $\bba[\ell]$. Since
$\bbd=\bba\bbw$ with the $N\times N$ block-diagonal matrix $\bbw$
with blocks $\bbw_{\ell}$, we conclude that $\bc=\bbw\bx$ is
block-sparse and
---thanks to the invertibility of the $\bbw_{\ell}$--- of the same
block-sparsity level as $\bx$, i.e., orthogonalization preserves
the block-sparsity level. It is easy to see that the definition of block-coherence in
(\ref{eq:bc}) is invariant to the choice of orthonormal basis
$\bba[\ell]$ for $\R(\bba[\ell])$. This is because any other basis
has the form $\bba[\ell] \bbu_{\ell}$ for some unitary matrix
$\bbu_{\ell}$, and from the properties of the spectral norm
\begin{equation}
\rho(\bbm[\ell,r])=\rho(\bbu^{H}_{\ell} \bbm[\ell,r] \bbu_{r})
\end{equation}
for any unitary matrices $\bbu_{\ell},\bbu_{r}$. Unfortunately, it
seems difficult to derive general results on the relation between
$\mu$ before and $\mub$ after orthogonalization. Nevertheless, we
can establish a minimum block size $d$ above which
orthogonalization followed by block-sparse recovery leads to a
guaranteed improvement in the recovery thresholds. We first note
that the coherence $\mu$ of a dictionary consisting of $N=Md$ elements
in a vector space of dimension $L=Rd$ can be lower-bounded as
\cite{strohmer2003}
\begin{align}
        \mu\geq
%\sqrt{\frac{N-L}{L(N-1)}} = 
\sqrt{\frac{M-R}{R(Md-1)}}\stackrel{Md\,\gg\,1}{\approx}\sqrt{\frac{M-R}{RMd}}.
\end{align}
Using this lower bound together with Proposition~\ref{prop:orthoblocks} and the fact that after orthogonalization we have $\nu=0$, 
it can be shown that if $d>RM/(M-R)$, then the recovery threshold obtained from taking block-sparsity into
account in the orthogonalized dictionary is higher than the recovery threshold corresponding to conventional sparsity in the original dictionary. 
This is true irrespectively of the dictionary we start from
as long as the dictionary satisfies the conditions of Proposition \ref{prop:inv}.

Finally, we note that finding dictionaries that lead to significant improvements in the recovery thresholds when exploiting block-sparsity seems to be a difficult design problem. For example, partitioning the realizations
of i.i.d. Gaussian matrices into blocks will, in general, not lead to satisfactory results. Nevertheless, there do exist dictionaries where significant improvements are possible.
Consider, for example, the pair of
bases $\bphi=\bbi_L$ and $\bpsi=\bbf \otimes \bbu_d$ shown in Section \ref{sec:uca}
to achieve the lower bound in (\ref{eq:bco4}).
For the corresponding dictionary $\bbd=[\bphi \,\, \bpsi]$, we have $M=2R$, $\mub=1/(d\sqrt{R})$, with the recovery threshold, assuming that block-sparsity is exploited, given by $kd<d(\sqrt{R}+1)/2$. The coherence of the dictionary is
$\mu = \|\textrm{vec}(\bbu_d)\|_\infty/\sqrt{R}$. Fig. \ref{fig:simu1}, obtained by averaging over randomly chosen unitary matrices $\bbu_d$, shows that the recovery thresholds obtained by taking block-sparsity into
account can be significantly higher than those for conventional sparsity. In particular, for
$\bbu_d=\bbi_d$, we obtain the conventional recovery threshold as $k=kd<(\sqrt{R}+1)/2$, which allows us to conclude that exploiting block-sparsity can result in guaranteed recovery for a sparsity level that is $d$ times higher than what would
be obtained in the (conventional) sparse case.
\begin{figure}[t]
\begin{center}
  \includegraphics[width=\linewidth]{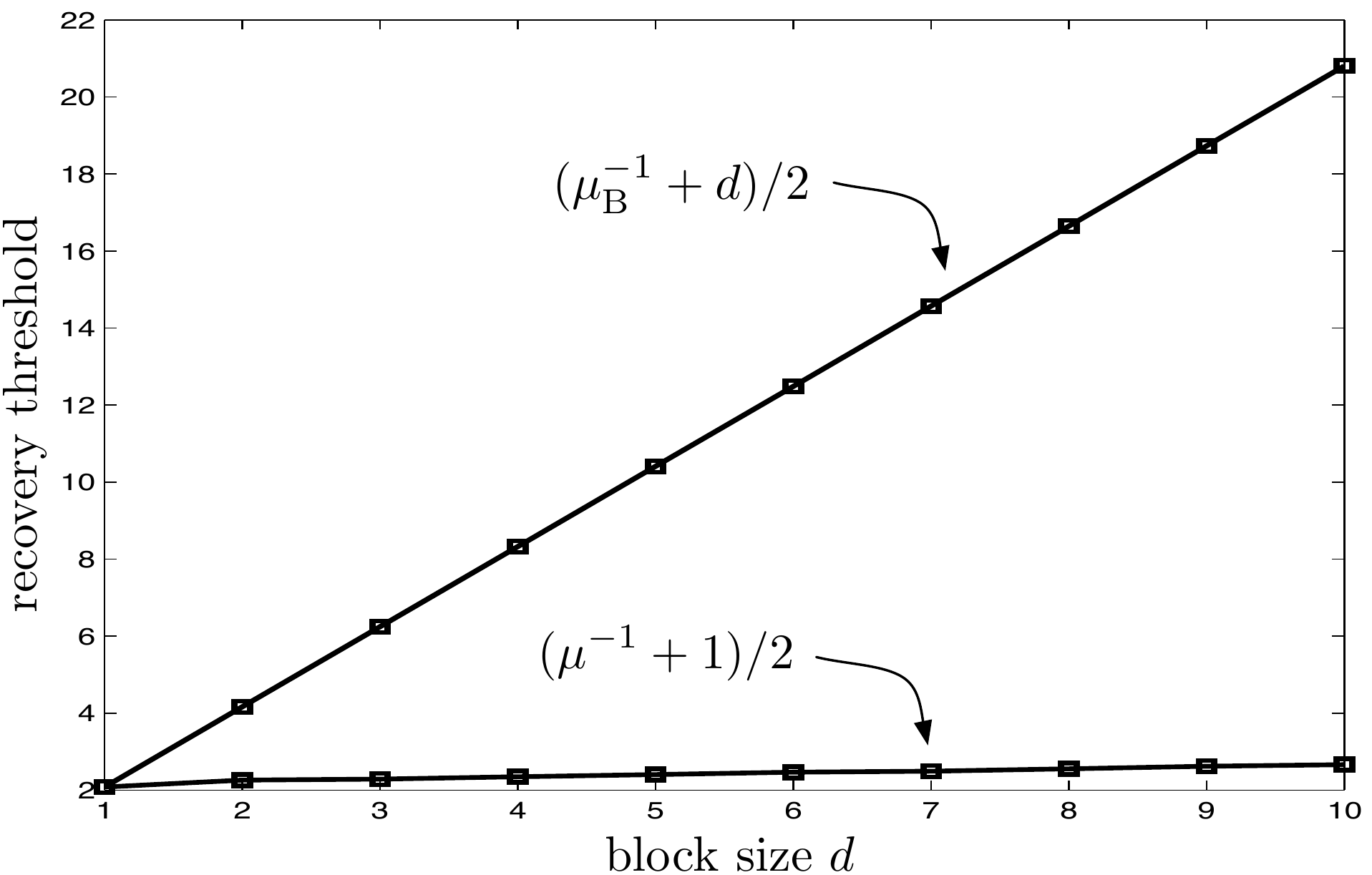}
  \caption{Recovery thresholds for both block-sparsity and conventional sparsity for $R=10$ as a function of $d$.
    }
  \label{fig:simu1}
  \end{center}
\end{figure}

%However, one also needs to keep in mind that the bounds we derived give deterministic performance guarantees, which are known to give lower thresholds than their probabilistic counterparts.

\section{Numerical Results}
\label{sec:num}

The aim of this section is to quantify the improvement in the
recovery properties of OMP and BP obtained by taking
block-sparsity explicitly into account and performing recovery
using BOMP and L-OPT, respectively. 
In all simulation examples
below, we randomly generate dictionaries by drawing from i.i.d.
Gaussian matrices and normalizing the resulting columns to $1$.
The dictionary is divided into consecutive blocks of length $d$.
The sparse vector to be recovered has i.i.d. Gaussian entries on
the randomly chosen support set (according to a uniform prior).

In Figs. \ref{fig:BOMP10_100} and \ref{fig:BOMP10_20}, we plot the
recovery success rate\footnote{Success is declared if the
recovered vector is within a certain small Euclidean distance of
the original vector.} as a function of the block-sparsity level of
the signal to be recovered. For each block-sparsity level we
average over $1000$ pairs of realizations of the dictionary and
the block-sparse signal. We can see that BOMP outperforms OMP
significantly and BOMP with orthogonalized blocks, denoted as
BOMP-O, yields slightly better performance than BOMP. We also
evaluate the performance of L-OPT compared to BP, as well as L-OPT
run on orthogonalized blocks, termed L-OPT-O. For each
block-sparsity level we average over $200$ pairs of realizations
of the dictionary and the block-sparse signal. The corresponding
results, depicted in Figs. \ref{fig:LOPT10_100} and
\ref{fig:LOPT10_20}, show that L-OPT outperforms BP, and L-OPT-O
slightly outperforms L-OPT. Furthermore, we can see that BOMP-O
significantly outperforms L-OPT-O.

%In the following we simulate the performance of BOMP compared to OMP. We randomly generate dictionaries with Gaussian entries and normalize the columns to one, i.e., the columns of the dictionary are uniformly distributed on the sphere. The sparse vector that is to be recovered has random Gaussian entries. The dictionary is then divided into blocks and we run BOMP once with the given blocks and once with the elements within each block being orthogonalized. Furthermore, we run the conventional OMP algorithm. We plot the success rate for perfect recovery as a function of the block-sparsity of the signal to be recovered in Figs. \ref{fig:BOMP10_100} and \ref{fig:BOMP10_20}. At each  block-sparsity level 1000 realizations of the dictionary and the sparse signal are considered. We see that both BOMP with orthogonalized blocks (BOMP-O) and general blocks outperform conventional OMP in the sense that they yield a higher success rate at every block-sparsity level. In addition, BOMP-O  slightly outperforms BOMP for general blocks.
\begin{figure}[t]
\begin{center}
  \includegraphics[width=\linewidth]{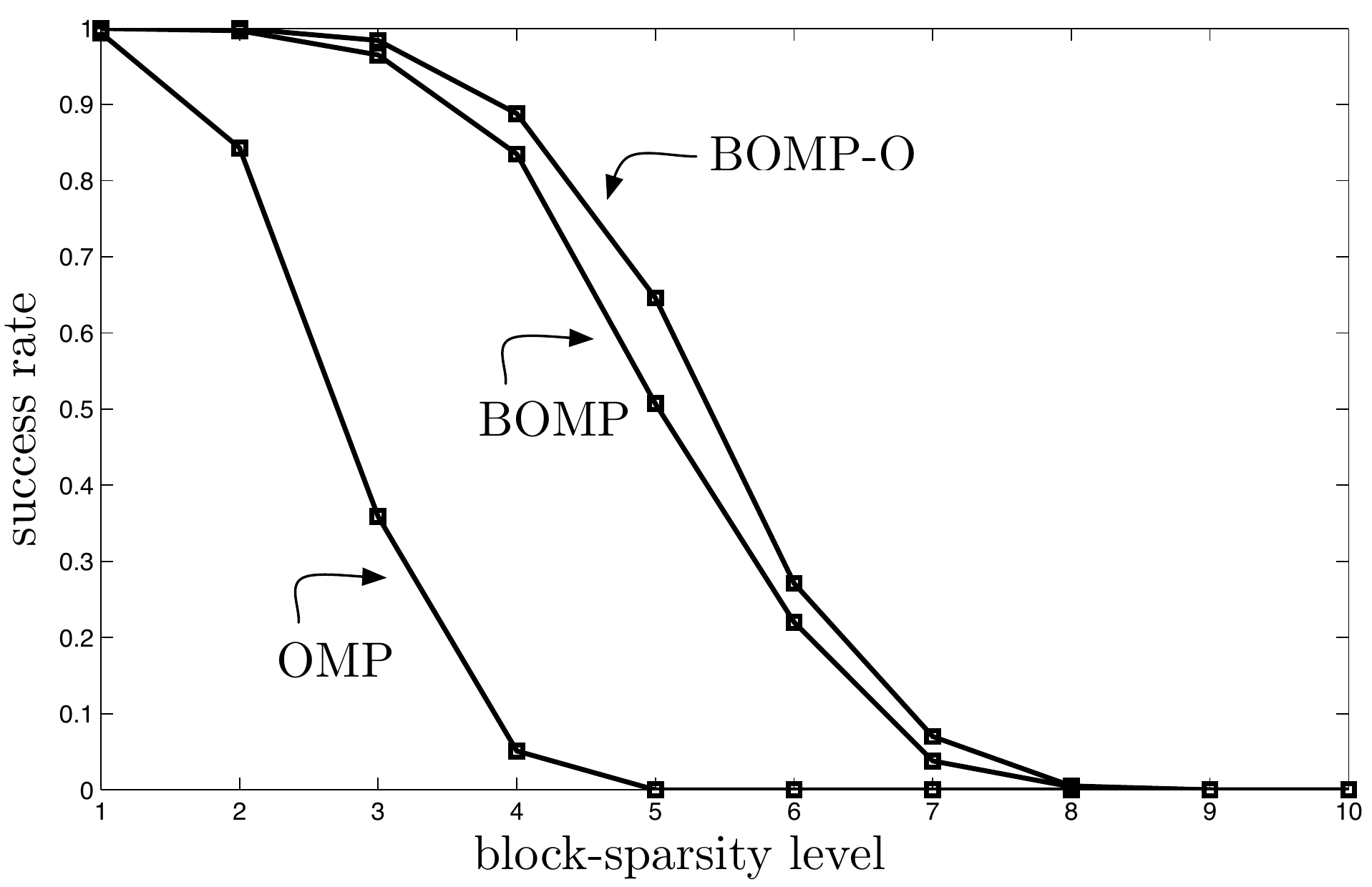}
  \caption{Performance of OMP, BOMP, and BOMP-O for a dictionary with $L=40,N=400$, and $d=4$.
    }
  \label{fig:BOMP10_100}
  \end{center}
\end{figure}

\begin{figure}[t]
\begin{center}
 \includegraphics[width=\linewidth]{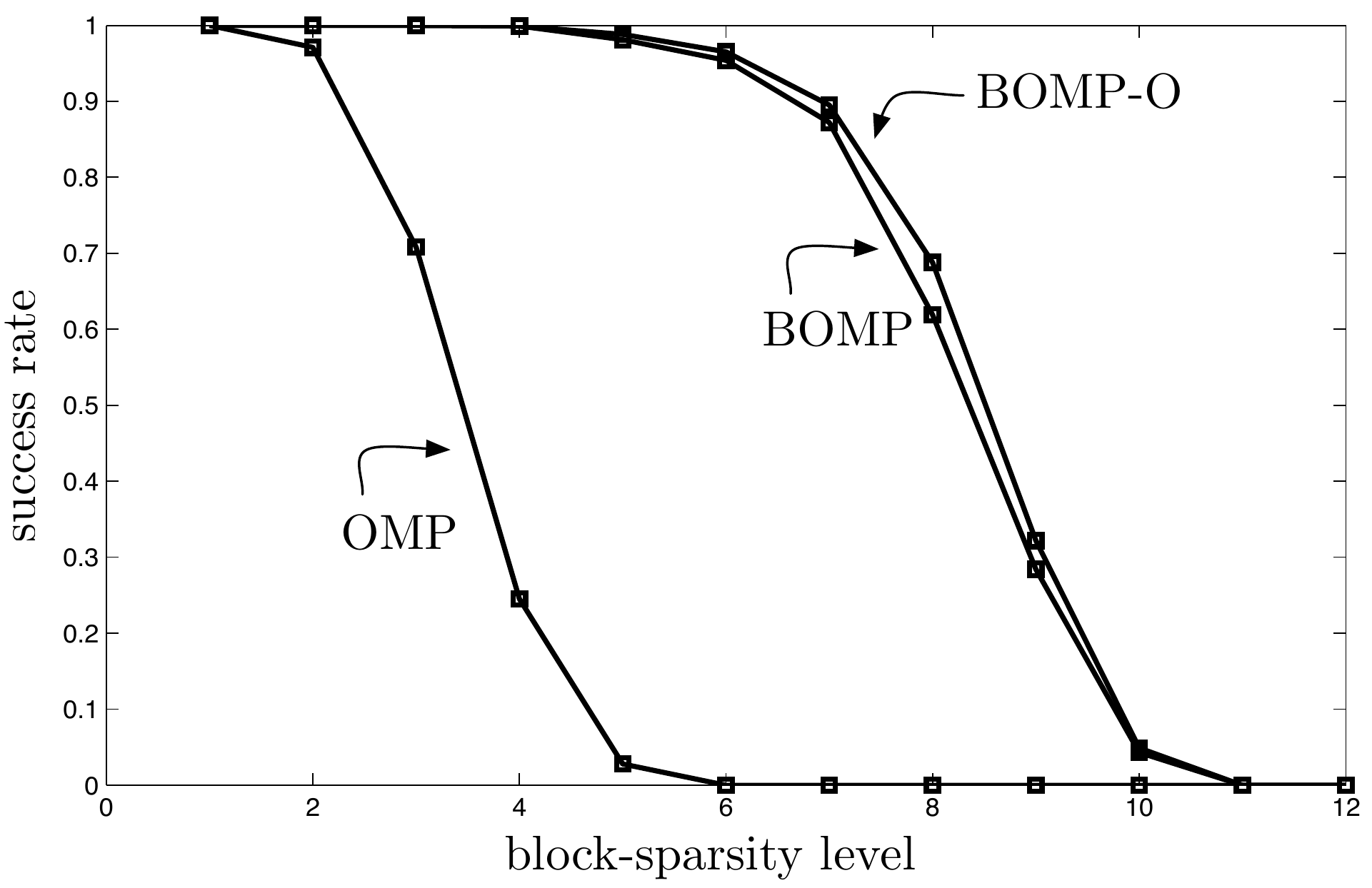}
  \caption{Performance of OMP, BOMP, and BOMP-O for a dictionary with $L=80,N=160$, and $d=8$.
    }
  \label{fig:BOMP10_20}
  \end{center}
\end{figure}

\begin{figure}[t]
\begin{center}
  \includegraphics[width=\linewidth]{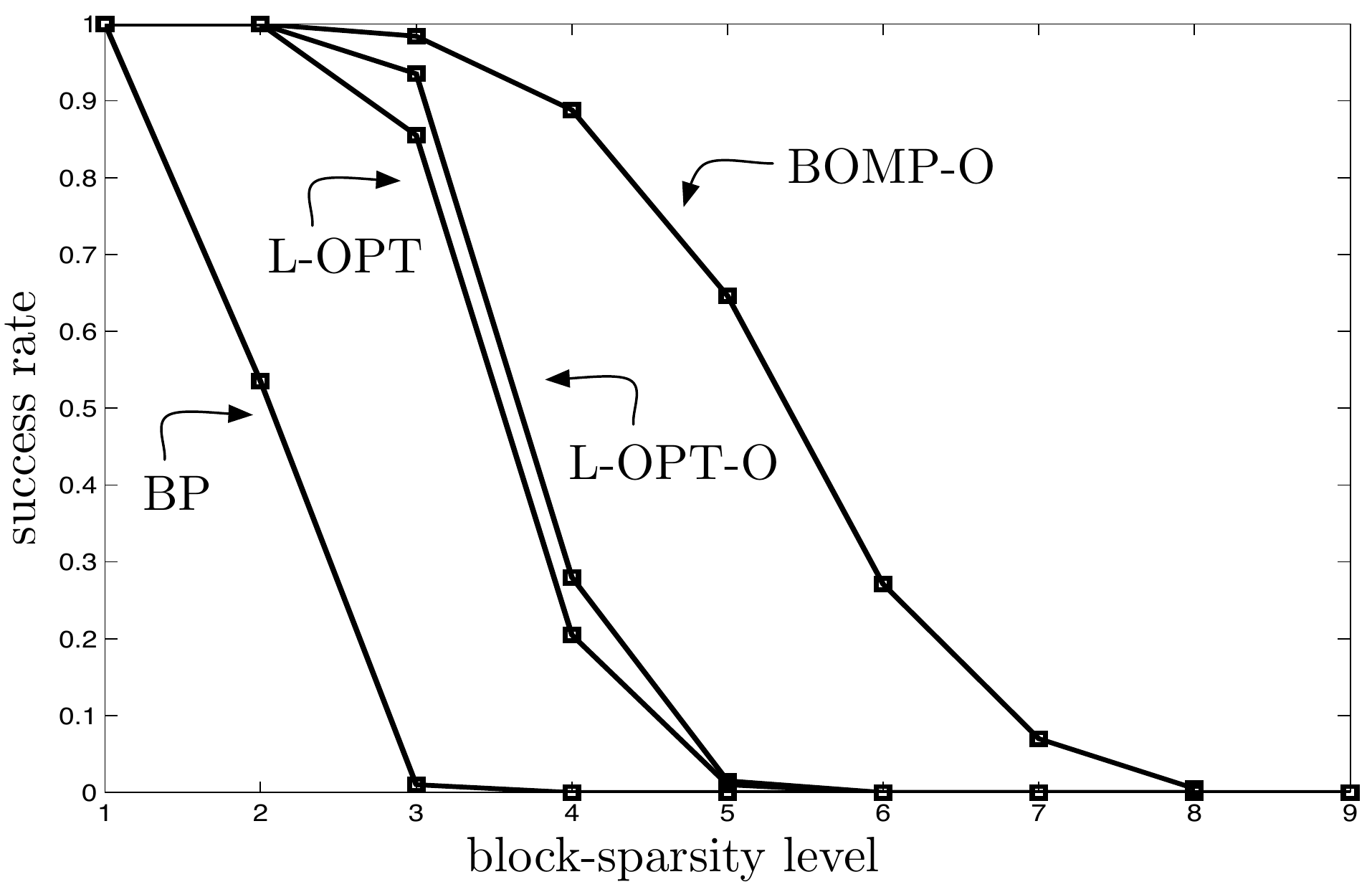}
  \caption{Performance of BP, L-OPT, L-OPT-O, and BOMP-O for a dictionary with $L=40,N=400$, and $d=4$.
    }
  \label{fig:LOPT10_100}
  \end{center}
\end{figure}

%We also simulate the performance of L-OPT compared to BP, where L-OPT is once run for the given blocks and once for orthogonalized blocks (L-OPT-O). The dictionary and the sparse signal are created as for BOMP vs OMP. For L-OPT vs. BP, we consider 200 realizations at each block-sparsity level. The results of these simulations are shown in Figs. \ref{fig:LOPT10_100} and \ref{fig:LOPT10_20}. For a comparison, we also included the corresponding performance curve of BOMP-O in each figure. L-OPT and L-OPT-O outperform BP, and L-OPT-O slightly outperforms L-OPT. Furthermore, BOMP-O outperforms L-OPT-O. Finally, we conclude that orthogonalizing the blocks before running the recovery algorithms increases recovery performance.

\begin{figure}[t]
\begin{center}
  \includegraphics[width=\linewidth]{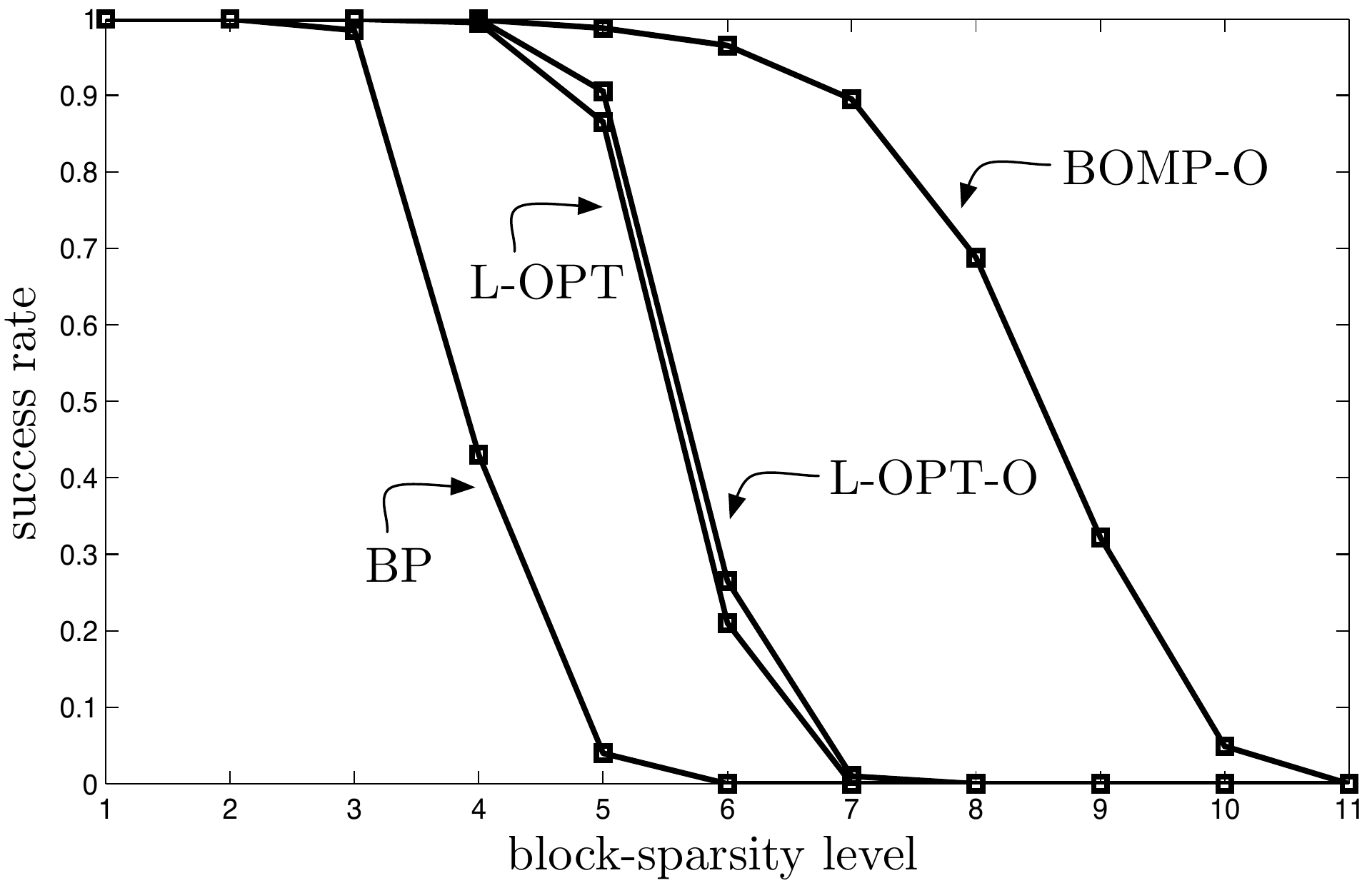}
  \caption{Performance of BP, L-OPT, L-OPT-O, and BOMP-O for a dictionary with $L=80,N=160$, and $d=8$.
    }
  \label{fig:LOPT10_20}
  \end{center}
\end{figure}

\section{Conclusion}
\label{sec:conclude}

This paper extends the concepts of uncertainty relations, coherence,
and recovery thresholds for matching pursuit and basis pursuit to the case of sparse
signals that have additional structure, namely block-sparsity. The extension is made possible by an appropriate definition of block-coherence.

The motivation for considering block-sparse signals is two-fold.
First, in many applications the nonzero elements of sparse vectors tend to cluster
in blocks; several examples are given in \cite{EM082}. Second, it
is shown in \cite{EM082} that sampling problems over
unions of subspaces can be converted into block-sparse recovery
problems. Specifically, this is true when the union has a
direct-sum decomposition, which is the case in many applications
including multiband signals
\cite{Eldar2009,MishaliEldar2009,ME09,Landau67}.
Reducing union of subspaces problems to block-sparse recovery
problems allows for the first general class of concrete recovery
methods for union of subspace problems. This was the main
contribution of \cite{EM082} together with equivalence and
robustness proofs for L-OPT based on a suitably modified
definition of the restricted isometry property. Here, we
complement this contribution by developing similar results using
the concept of block-coherence.

%The motivation for considering
%block-sparse signals derives from the fact that this is the mathematical structure underlying
%multi-band signals \cite{Landau67} and signals that lie in unions of subspaces \cite{Blumensath07,LD08}.

%Compressed sensing for
%block-sparse signals falls into the framework of model-based compressed sensing as defined in
%\cite{richb08}. An interesting open question is whether the program developed in this paper
%can be extended to the case of general model-based compressed sensing.

%\section{To Do:}

%\begin{itemize}
%\item I didn't touch the abstract and intro which need to be
%extended at the end. \item The results now are based on coherence.
%Tropp proves results using the Babel function however we had the
%problem of extending the Babel definition to the block case. The
%results work as they are now so we can leave them, but tighter
%results are possible with the Babel. The question is if we want to
%revisit it. I think the paper is ok without it as well. \item Do
%we want to add simulations? The paper is theoretical so its not
%necessary. \item Develop results parallel to Romberg and Candes
%that describe how to design a measurement matrix by selecting
%random rows from a dictionary that is ``incoherent" with our given
%sparsity basis. Concentration results for block-sparsity. The
%question is whether we want to do it here or leave it to a follow
%up paper.
%\end{itemize}

\appendices

\section{Proof of Lemma~\ref{lemma:norms}}
\label{app:mnorm}

We first prove (\ref{eq:lemi}): 
\begin{eqnarray}
\label{eq:lemi1} \|\bba\bx\|_{2,\infty} & = & \max_j \left\|
\sum_i \bba[j,i]\bx[i]\right\|_2  \nonumber \\ 
& \leq &   \max_j \sum_i \left\|\bba[j,i]\bx[i]\right\|_2 \nonumber \\
& \leq &
\max_j\sum_{i}\|\bx[i]\|_2\, \rho(\bba[j,i]) \nonumber \\
& \leq & \|\bx\|_{2,\infty} \max_j\sum_{i}\rho(\bba[j,i]).
\end{eqnarray}
Therefore, for any $\bx\,\in\,\CC^N$ with $\bx\,\neq\,{\bf 0}$, we have
\begin{equation}
\frac{\|\bba\bx\|_{2,\infty}}{\|\bx\|_{2,\infty}} \leq \rho_r
(\bba)
\end{equation}
which establishes (\ref{eq:lemi}). The proof of (\ref{eq:lem1}) is
similar:
\begin{eqnarray}
 \|\bba\bx\|_{2,1} & = & \sum_j \left\|\sum_{i}
\bba[j,i]\bx[i]\right\|_2  \nonumber \\
& \leq & \sum_j \sum_{i} \left\|\bba[j,i]\bx[i]\right\|_2
\nonumber \\
& \leq & \sum_i \|\bx[i]\|_2 \sum_j \rho(\bba[j,i]) \nonumber \\
& \leq  & \rho_c(\bba) \|\bx\|_{2,1} \label{eq:axrho}
\end{eqnarray}
from which the result follows. Finally, we have $\rho_{c}(\bba^{H})=\max_{r} \sum_{\ell}\rho(\bba^{H}[\ell,r])=\max_{r}\sum_{\ell}\rho(\bba[r,\ell])=\rho_{r}(\bba)$.

\section{Proof of Lemma~\ref{lemma:mnorm}}
\label{app:matrixnorm}
%We need to show that $\rho_{c}(\bba)$ satisfies the following five axioms:
%\begin{itemize}
%\item
%Nonnegative: $\rho_{c}(\bba)\,\ge\,0$
%\item
%Positive: $\rho_{c}(\bba)=0$ if and only if $\bba={\bf 0}$
%\item
%Homogeneous: $\rho_{c}(\alpha \bba)=|\alpha|\rho_{c}(\bba)$ for all complex scalars $\alpha$
%\item
%Triangle inequality: $\rho_{c}(\bba+\bbb)\,\le\,\rho_{c}(\bba)+\rho_{c}(\bbb)$
%\item
%Submultiplicative: $\rho_{c}(\bba \bbb)\,\le\,\rho_{c}(\bba)\rho_{c}(\bbb)$
%\end{itemize}

Nonnegativity and positivity follow immediately from the fact that the spectral norm is a matrix norm \cite[p.~295]{Horn85}.
Homogeneity follows by noting that
\begin{align}
\rho_{c}(\alpha\bba)&=\max_{r}\sum_{\ell}\rho(\alpha \bba[\ell,r]) \nonumber \\
&=\max_{r}\sum_{\ell}|\alpha|\rho(\bba[\ell,r]) \nonumber \\
&=|\alpha|\rho_{c}(\bba).
\end{align}
The triangle inequality is obtained as follows:
\begin{eqnarray*}
\rho_{c}(\bba+\bbb) & = & \max_{r}\sum_{\ell}\rho(\bba[\ell,r]+\bbb[\ell,r])\\
& \le & \max_{r} \left( \sum_{\ell}\rho(\bba[\ell,r])+ \sum_{\ell} \rho(\bbb[\ell,r])\right)\\
& \le & \max_{r}\sum_{\ell}\rho(\bba[\ell,r])+\max_{r}\sum_{\ell}\rho(\bbb[\ell,r])\\
& = & \rho_{c}(\bba)+\rho_{c}(\bbb)
\end{eqnarray*}
where the first inequality is a consequence of the spectral norm satisfying the triangle inequality.

Finally, to verify submultiplicativity, note that,
\begin{equation}
\label{eq:mxb} \rho_c(\bba\bbb) =\max_\ell \rho_c(\bba\bbb[\ell]).
\end{equation}
Therefore, if we prove that
\begin{equation}
\label{eq:nlemma}
 \rho_c(\bba\bbb[\ell]) \leq  \rho_c(\bba)\rho_c(\bbb[\ell])
\end{equation}
the result follows from (\ref{eq:mxb}) and the fact that
$\max_\ell \rho_c(\bbb[\ell])=\rho_c(\bbb)$.

To prove (\ref{eq:nlemma}), note that
\begin{align}
\label{eq:nlemma1}
 \rho_c(\bba\bbb[\ell]) &=\sum_{i} \rho\! \bl  \sum_j
 \bba[i,j]\bbb[j,\ell]\br \nonumber\\
 & \leq \sum_{i} \sum_j \rho\bl
 \bba[i,j]\bbb[j,\ell]\br \nonumber \\
 &\leq \sum_{i} \sum_j \rho( \bba[i,j])\rho(\bbb[j,\ell])
\end{align}
where we used the triangle inequality for, and the submultiplicativity of, the spectral norm.
Now, we have
\begin{equation}
\label{eq:nlemma2} \sum_i \rho( \bba[i,j]) \leq \max_{\ell} \sum_i
\rho( \bba[i,\ell])=\rho_c(\bba).
\end{equation}
Substituting into (\ref{eq:nlemma1}) yields
\begin{equation}
 \rho_c(\bba\bbb[\ell])
 \leq \rho_c(\bba)\sum_j
 \rho(\bbb[j,\ell])=\rho_c(\bba)\rho_c(\bbb[\ell])
\end{equation}
which completes the proof.

\section{Proof of Lemma~\ref{lemma:1inq}}
\label{app:1inq}

The proof of the statement $\|\bba \bv\|_{2,1}\,\le\,\rho_c(\bba) \|\bv\|_{2,1}$ follows directly from (\ref{eq:axrho}) by replacing
$\bba$ by an $L\,\times\,(kd)$ matrix and $\bx\,\in\,\CC^N$ by $\bv \, \in \, \CC^{kd}$ with $\|\bv[l]\|_{2}\,>\,0$, for all $\ell$.
%By definition,
%\begin{equation}
%\|\bba \bv\|_{2,1}=\sum_j \left\|\sum_i \bba[j,i]\bv[i]\right\|_2.
%\end{equation}
%Therefore,
%\begin{eqnarray}
%\|\bba \bv\|_{2,1} & \leq & \sum_j \sum_i
%\left\|\bba[j,i]\bv[i]\right\|_2  \nonumber \\
%& \leq & \sum_j \sum_i
%\rho(\bba[j,i])\|\bv[i]\|_2  \nonumber \\
%& = &  \sum_i \|\bv[i]\|_2 \bl
%\sum_j \rho(\bba[j,i]) \br  \nonumber \\
%& \leq &  \rho_c(\bba)\sum_i \|\bv[i]\|_2,
%\end{eqnarray}
%where we used (\ref{eq:lem1}).
If the $a_i=\sum_j \rho(\bba[j,i])$ are
not all equal, then the last inequality in (\ref{eq:axrho}) is strict. Since
$a_i=\rho_c(\bba \bbj_i)$ the result follows.

%\section{Proof of Lemma~\ref{lemma:mnorms}}
%\label{app:mnorms}

\bibliographystyle{IEEEtran}
\bibliography{masterbib}

\end{document}